\newcommand{\nc}{\newcommand}
\nc{\theo}{\begin{theorem}}
\nc{\etheo}{\end{theorem}}
\nc{\lem}{\begin{lemma}}
\nc{\elem}{\end{lemma}}
\nc{\elemma}{\end{lemma}}
\nc{\corol}{\begin{corollary}}
\nc{\ecorol}{\end{corollary}}
\theoremstyle{definition}
\newtheorem{defn}[theorem]{Definition}
\nc{\bdefn}{\begin{defn}}
\nc{\edefn}{\end{defn}}
\nc{\define}[1]{{\bf\boldmath #1}}
\theoremstyle{plain}
\nc{\prop}{\begin{proposition}}
\nc{\eprop}{\end{proposition}}
\newtheorem{Conjecture}[theorem]{Conjecture}
\newcommand{\conj}{\begin{Conjecture}}
\newcommand{\econj}{\end{Conjecture}}
\newtheorem{Example}[theorem]{\small Example}
\newcommand{\ex}{\begin{Example}\small \rm}
\newcommand{\eex}{\end{Example}}
\newtheorem{Examples}{\small Examples}
\nc{\exx}{\begin{Examples}\small\rm\begin{enum}}
\newcounter{enum}
\newenvironment{enum}{\begin{list}{(\arabic{enum})}%
{\setlength{\labelwidth}{5mm}\setlength{\leftmargin}{10mm}%
\setlength{\itemindent}{0pt}\usecounter{enum}}}{\end{list}}
\newcounter{menum}
\newcounter{aenum}
\newcommand{\look}{\begin{proof}}
\newcommand{\lueg}{\look}
\newcommand{\hx}{\end{proof}}
\newcommand{\ra}{\rightarrow}
\renewcommand{\iff}{\leftrightarrow}
\newcommand{\E}{\exists}
\newcommand{\A}{\forall}
\renewcommand{\phi}{\varphi}
\renewcommand{\theta}{\vartheta}
\renewcommand{\emptyset}{\varnothing}
\renewcommand{\AA}{{\mathfrak A}}
\renewcommand{\epsilon}{\varepsilon}
\newcommand{\FO}{{\rm FO}}
\newcommand{\LFP}{{\rm LFP\ }}
\newcommand{\posLFP}{{\rm posLFP}}
\newcommand{\N}{{\mathbb N}}
\newcommand{\Ninf}{{\mathbb N}^{\infty}}
\newcommand{\Sinf}{{\mathbb S}^{\infty}}
\newcommand{\R}{{\mathbb R}}
\newcommand{\B}{{\mathbb B}}
\renewcommand{\S}{{\mathbb S}}
\newcommand{\V}{{\mathbb V}}
\newcommand{\W}{{\mathbb W}}
\nc{\IF}{{\bf if }}
\nc{\THEN}{{\bf then } }
\nc{\ELSE}{{\bf else } }
\nc{\DO}{{\bf do }}
\nc{\OD}{{\bf od } }
\nc{\OUTPUT}{{\bf output }}
\nc{\END}{{\bf end }}
\nc{\WHILE}{{\bf while} }
\nc{\INPUT}{{\bf Input} }
\nc{\GUESS}{{\bf guess } }
\nc{\CHOOSE}{{\bf universally choose } }
\nc{\REJECT}{{\bf reject } }
\nc{\ACCEPT}{{\bf accept } }
\nc{\FOR}{{\bf for } }
\nc{\FORALL}{{\bf for all } }
\nc{\TO}{{\bf to } }
\nc{\RETURN}{{\bf return } }
\newcommand{\Gg}{{\cal G}}
\newcommand{\Kk}{{\cal K}}
\newcommand{\Ss}{{\cal S}}
\newcommand{\Tt}{{\cal T}}
\renewcommand{\bar}{\overline}
\newcommand{\cutout}[1]{}
\DeclareMathOperator{\lfp}{\mathbf{lfp}}
\DeclareMathOperator{\gfp}{\mathbf{gfp}}
\nc{\ext}[1]{[\![ #1 ]\!]}  \nc{\extI}[1]{[\![ #1 ]\!]^\II}
\nc{\extK}[1]{[\![ #1 ]\!]^\Kk} \nc{\extT}[1]{[\![ #1 ]\!]^\Tt}
\renewcommand{\st}{\,.\,} 
\newcommand{\Strat}{\mathrm{Strat}}
\DeclareMathOperator{\Plays}{\mathrm{Plays}}
\DeclareMathOperator{\Atoms}{\mathrm{Atoms}}
\DeclareMathOperator{\NegAtoms}{\mathrm{NegAtoms}}
\DeclareMathOperator{\Lit}{\mathrm{Lit}}
\DeclareMathOperator{\nnf}{\mathrm{nnf}}
\DeclareMathOperator{\op}{\mathrm{\,op\,}}
\nc{\Exptime}{\text{\sc Exptime}}
\nc{\Nexptime}{\text{\sc Nexptime}}
\nc{\Ntime}{\text{\sc Ntime}}
\nc{\Ptime}{\text{\sc Ptime}}
\newcommand{\Bool}{\mathbb{B}}
\newcommand{\Access}{\mathbb{A}}
\newcommand{\Nat}{\mathbb{N}}
\newcommand{\Trop}{\mathbb{T}}
\newcommand{\Vit}{\mathbb{V}}
\newcommand{\PosBool}{\mathsf{PosBool}}
\newcommand{\Pub}{\mathsf{P}}
\newcommand{\Cnf}{\mathsf{C}}
\newcommand{\Sec}{\mathsf{S}}
\newcommand{\Tsec}{\mathsf{T}}
\newcommand{\nn}[1]{\bar{#1}}
\newcommand{\nnp}{\nn{p}}
\newcommand{\nnX}{\nn{X}}
\newcommand{\nnx}{\nn{x}}
\newcommand{\topic}[1]{\medskip\noindent{\bf #1.\ }}
\title{Provenance Analysis for Logic and Games}
\author{Erich Grädel}{RWTH Aachen University, Germany}{graedel@logic.rwth-aachen.de}{}{}
\author{Val Tannen}{Univ.~of~Pennsylvania, U.S.A.}{val@cis.upenn.edu}{}{}
\authorrunning{E. Grädel and V. Tannen}
\keywords{Finite Model Theory, Provenance, Games}
\begin{document}

\maketitle

\begin{abstract}
A model checking computation checks whether a given logical sentence is true in a given finite structure.
Provenance analysis abstracts from such a computation mathematical information on how the result depends on the
atomic data that describe the structure. 
In database theory, provenance analysis by interpretations in commutative semirings has been rather succesful
for positive query languages (such a unions of conjunctive queries, positive relational algebra, or datalog).
However, it did not really offer an adequate treatment of negation or missing information.
Here we propose a new approach for the provenance analysis of logics with negation, such as first-order logic and 
fixed-point logics. It is closely related to a provenance analysis of the associated model-checking games, 
and based on new semirings of  dual-indeterminate polynomials or dual-indeterminate formal power series.
These are obtained by taking quotients of traditional provenance semirings by congruences 
that are generated by products of positive and negative provenance tokens.
Beyond the use for model-checking problems in logics, provenance analysis of games is of independent interest.
Provenance values in games provide detailed information about the number and properties of the strategies of the players, 
far beyond  the question whether or not a player has a winning strategy from a given position.

\end{abstract}

\section{Introduction}

Provenance analysis aims at understanding how the result of a computational process
with a complex input, consisting of multiple items, depends on the various parts of
this input. In database theory, provenance analysis based on interpretations in commutative semirings 
has been developed for positive database query languages,  to understand which combinations of 
the atomic facts in a database can be used for deriving the result of a given query. 
In this approach, atomic facts are interpreted not just by true or false, but by values in an appropriate
semiring, where 0 is the value of false statements, whereas any element $a\neq 0$ of the  semiring  stands
for some shade of truth. These values are then propagated from the atomic facts to arbitrary queries in the language, 
which permits to answer questions such as the minimal cost of a query evaluation, the confidence one can
have that the result is true, the number of different ways in which the result can be computed, 
or the clearance level that is required for obtaining the output,
under the assumption that some facts are labelled as confidential, secret, top secret, etc.
We refer to \cite{GreenTan17} for a recent account and many references on the semiring framework for database provenance.

Scenarios to which the  semiring provenance approach has been successfully applied include  
unions of conjunctive queries, positive relational algebra, nested relations, Datalog, XQuery, SQL-aggregates 
and several others,
and it has been implemented in software systems such as Orchestra
and Propolis. For details, see e.g. \cite{AmsterdamerDeuTan11a,DeutchMilRoyTan14,FosterGreTan08,Green11,GreenKarTan07,
Tannen13}.   A main limitation of this approach is that is has been
largely confined to \emph{positive} query
languages. Attempts to add operations that capture \emph{difference of relations} have led to
interesting and algebraically challenging, but divergent
approaches \cite{AmsterdamerDeuTan11,GeertsPog10,GeertsUngKarFunChr16,GreenIveTan09}. 
In particular there has been no systematic approach in database theory for tracking \emph{negative information}, 
and no convincing provenance analysis for languages with full negation.

Here, we would like to develop a new approach for a semiring provenance analysis
for model checking problems of logics with negation, in particular first-order logic and fixed-point logic.
This approach is based on several ideas:
\begin{itemize}
\item Provenance analysis of logics is intimately connected to provenance analysis of games. In the same way as
formula evaluation or model checking can be formulated in game theoretic terms, also the propagation of provenance values 
from atomic facts to arbitrary formulae can be viewed
as a process on the associated games. Also the typical \emph{results} of a provenance analysis of database queries
or logical formulae, concerning for instance confidence scores, costs, required clearance level, or number of `proof trees' have 
natural game-theoretic interpretations. In fact, provenance analysis of games is of independent interest, and provenance values
of positions in a game provide detailed information about the number and properties of the strategies of the players, far beyond 
the question whether or not a player has a winning strategy from a given position. 
\item We deal with negation by transformation to negation normal form. This is the common approach for 
the design of model checking games and game-based evaluation algorithms. But while this is there mainly a matter of convenience
(to avoid role switches between players during a play),  provenance semantics imposes even stronger reasons for transformations to negation normal form. 
Indeed, beyond Boolean semantics, negation is not a compositional logical operation:
the provenance value of $\neg\phi$ is not necessarily determined by the  provenance value of $\phi$.
\item On the algebraic side, we introduce new provenance semirings of polynomials and formal power series,
which take negation into account. They are obtained by taking quotients of traditional provenance semirings by congruences 
generated by products of positive and negative provenance tokens; they are called semirings of dual-indeterminate
polynomials or dual-indeterminate power series.  
\end{itemize}

Preliminary accounts of our approach, confined to first-order logic and without the connection to games,
but discussing potential applications to issues such as model updates, and 
reverse provenance analysis (e.g., confidence maximization), 
have been given in \cite{Tannen17} and \cite{GraedelTan17}. 
Here we put also the provenance analysis of games into focus, in fact we develop our approach here
from the perspectives of games. We shall first discuss the case of finite acyclic games
which are sufficient for the provenance analysis of first-order logic and its fragments. 
Most of the central issues of our approach, in particular the view of provenance values
in terms of valuations of strategies and plays, appear already in this simple scenario.
We shall then discuss reachability games on graphs that admit cycles. These are the games that
are relevant for the provenance analysis of logics with least  (but without greatest) fixed points. 
For these it will be necessary to restrict from arbitrary commutative semirings to
$\omega$-continuous ones. Such an analysis has previously been carried out for Datalog, but to
deal with (atomic) negation we have to combine this  with the idea of taking quotients by the duality on indeterminates,
which will lead us to semirings of dual-indeterminate power series.
Finally we shall outline a provenance approach for safety games and greatest fixed points. Our central algebraic tools here are
absorptive semirings, in particular the semiring $\Sinf[X]$ of generalized absorptive polynomials,
admitting also infinite exponents.

This paper is intended to lay foundations for our general approach to a provenance analysis 
of logic and games, that should take us far beyond the specific cases studied here. 
The application of the acyclic case to modal and guarded logics has been analysed in \cite{DannertGra19a}.
In \cite{XuZhaAlaTan18} our approach has been 
applied to database repairs; it has been shown how our treatment of negation, or absent information,
can be used to explain and repair missing query answers and the failure of integrity constraints in databases. 
Further, the potential of the provenance methods developed here for 
applications in knowledge represenation and description logics has been discussed in \cite{DannertGra19b}.
Work in progress includes the provenance analysis of temporal and dynamic logics in the setting of absorptive semirings, 
the study of logics of dependence and independence from the point of view of provenance,
and the algorithmic analysis of  computing provenance values in various settings.

\section{Commutative semirings}

\bdefn A \emph{commutative semiring}  is an algebraic structure 
$(K,+,\cdot,0,1)$, with $0\neq1$,  such that $(K,+,0)$
and $(K,\cdot,1)$ are commutative monoids, $\cdot$
distributes over $+$, and $0\cdot a=a\cdot0=0$. A semiring
is  \emph{+-positive}  if $a+b=0$ implies $a=0$ and
$b=0$. This excludes rings. A semiring is \emph{root-integral}
if $a\cdot a=0$ implies $a=0$.
All semirings considered in this paper are commutative,
+-positive and root-integral. 
Further, a commutative semiring is \emph{positive} if it is
+-positive and has no divisors of 0 (i.e. $a\cdot b=0$ implies
that $a=0$ and $b=0$). The standard semirings considered in
provenance analysis are in fact positive, but for an appropriate treatment 
of negation we shall introduce later in this paper semirings (of dual-indeterminate polynomials
or power series) that have divisors of 0.
\edefn

Notice that a semiring $K$ is positive if, and only if, 
the unique function $h: K\ra \{0,1\}$ with 
$h^{-1}(0)=\{0\}$ is a homomorphism from $K$ into the Boolean semiring $\B=(\{0,1\},\lor,\land,0,1)$. 
A semiring $K$ is (+)-idempotent if $a+a=a$, for all $a\in K$, and
$(+,\cdot)$-idempotent if, in addition, $a\cdot a=a$ for all $a$.
Further, $K$ is \emph{absorptive}
if $a+ab=a$, for all $a,b\in K$. Obviousy, every absorptive semiring is (+)-idempotent.

\medskip Elements of a commutative semiring will be used as truth values for
logical statements and as values for positions in games.
The intuition is that + describes the \emph{alternative use}
of information, as in disjunctions or existential quantifications, or for
different possible choices of a player in a game,
whereas $\cdot$ stands for the \emph{joint use} of information,
as in conjunctions or universal quantifications, or for choices in a game that are 
controlled by the opponent of the given player. Further, 0 is the value
of false statements or losing positions, whereas any element $a\neq 0$ of a semiring $K$ stands
for a ``nuanced''  interpretation of true or as a value of a non-losing position.   

\topic{Application semirings}
We briefly discuss some specific semirings that provide interesting
information but about a logical statement or a position in a game.
\begin{itemize}
\item The \emph{Boolean semiring} $\Bool=(\{0,1\},\vee,\wedge,0,1)$ is the standard habitat of 
logical truth. 
\item $\Nat=(\Nat,+,\cdot,0,1)$ is used here for counting winning strategies in games.
It also plays an important role for \emph{bag semantics} in databases.
\item $\Trop=(\mathbb{R}_{+}^{\infty},\min,+,\infty,0)$ 
is called the \emph{tropical} semiring. It has many applications
in several areas of computer science.
It is used here for measuring the cost of strategies.
\item The \emph{Viterbi} semiring $\Vit=([0,1],\max,\cdot,0,1)$
is isomorhic to $\Trop$ via $x\mapsto e^{-x}$ and $y\mapsto -\ln y$. 
We will think of the elements of $\Vit$ as \emph{confidence scores}
and use it to describe the confidence that a player can win from a given
position or the confidence assigned to a logical statement.
\item The \emph{min-max} semiring on a totally ordered set $(A,\leq)$
with least element $a$ and greatest  element $b$
is the semiring $(A,\max,\min,a,b)$. 
\end{itemize}

\topic{Provenance semirings}  Beyond the traditional application semirings, there
are some important provenance semirings of polynomials 
that are used for a general provenance analysis. These semirings have algebraic
\emph{universality} properties (they are freely generated) 
for various classes of semirings.
This allows us to compute provenance 
values once in a general such semiring and then to specialise it via homomorphisms 
(i.e. evaluation of the polynomials) to specific application semirings as needed. 

\begin{itemize}
\item For any set $X$, the semiring $\Nat[X]=(\Nat[X],+,\cdot,0,1)$
consists of the multivariate polynomials in indeterminates from $X$
and with coefficients from $\Nat$. This is the commutative
semiring freely generated by the set $X$. 
\item By dropping coefficients from $\Nat[X]$, we get the semiring
$\B[X]$ whose elements are just finite sets
of distinct monomials. It is the free (+)-idempotent semiring over $X$.
\item By dropping also exponents, we get the semiring $\W[X]$ of 
finite sums of monomials that are linear in each argument.
It is sometimes called the Why-semiring.
\item The free absorptive semiring $\S[X]$ over $X$ 
consists of 0,1 and all antichains of monomials
with respect to the component-wise order on their exponents.
It is the quotient of $\N[X]$ by the congruence induced by
$p\sim q$ for monomials $p,q$ with $p=qr$.  
\item Finally $\PosBool(X)=(\PosBool(X),\vee,\wedge,\bot,\top)$ is the semiring
whose elements are classes of equivalent positive (monotone)
boolean expressions with variables from $X$ (its elements
are in bijection with the positive boolean expressions 
in irredundant disjunctive normal form).
This is the distributive lattice freely generated by the set $X$.
\end{itemize}

\section{Games}

We consider two-player turn-based games on graphs. Such a game is defined by the game graph
on which it is played, and by the objectives of the players. 

\bdefn
 A \emph{game graph} is a structure
$\Gg=(V,V_0,V_1,T, E)$, where $V=V_0\cup V_1\cup T$ is the set of positions, partitioned into the sets
$V_0$, $V_1$ of the two players and the set $T$ of terminal positions, 
and  where $E\subseteq V\times V$ is the set of 
moves.
We denote the set of immediate successors of a position $v$ by 
$vE:=\{w: (v,w)\in E\}$ and require that $vE=\emptyset$ if, and only if, $v\in T$.
A play from an initial position $v_0$ is a finite or infinite path $v_0v_1v_2\dots$
through $\Gg$ where the successor $v_{i+1}\in v_iE$ is chosen by Player~0
if $v_i\in V_0$ and by Player~1 if $v_1\in V_1$.
A play ends when it reaches a terminal node $v_m\in T$.
\edefn

\bdefn For every game graph  $\Gg=(V,V_0,V_1,T, E)$, and every initial  position $v_0\in V$, the
\emph{tree unraveling} of $\Gg$ from $v_0$ is the
game tree $\Tt(\Gg,v_0)$ consisting of all finite paths from $v_0$.
More precisely, $\Tt(\Gg,v)=(V^\#,V_0^\#,V_1^\#,T^\#, E^\#)$,
where $V^\#$ is the set of all finite paths $\pi=v_0v_1\dots v_m$ through $\Gg$,
with $V_\sigma^\#=\{\pi v\in V^\#:  v\in V_\sigma\}$, $T^\#=\{\pi t\in V^\#:  t\in T\}$,
and $E^\#=\{(\pi v, \pi vv'): (v,v')\in E\}$. For most game-theoretic considerations, the
games played on $\Gg$ and its unravelings  are equivalent, via the canonical projection 
$\rho:\Tt(\Gg,v_0)\rightarrow \Gg$ that maps every path $\pi v$ to its end point $v$.   
\edefn

A strategy for a player in a game is a function that selects moves at points that are controlled
by that player. A strategy need not be defined at all positions of a player, but it must be
closed in the sense that it defines a move from each position that is reachable by
a play that is admitted by the strategy. There are several possibilities to define
the notion of a strategy formally. For our purposes it is convenient to
identify a strategy with the histories of plays that it admits, i.e. to view it 
as an appropriate subtree of  $\Tt(\Gg,v_0)$.

\bdefn\label{def:strategy} 
A \emph{strategy} of Player~$\sigma$ (for $\sigma\in\{0,1\}$)
from $v_0$
in a game $\Gg$  is a subtree of  $\Tt(\Gg,v_0)$,
of the form $\Ss=(W,F)$ with $W\subseteq V^\#$ and $F\subseteq (W\times W)\cap E^\#$, 
satisfying the following conditions:
\begin{itemize}
\item $W$ is closed under predecessors: if $\pi v\in W$ then also $\pi\in W$. 
\item If $\pi v\in W\cap V^\#_\sigma$, then $|(\pi v)F|=1$.
\item If $\pi v\in W\cap V^\#_{1-\sigma}$ then $(\pi v)F=(\pi v)E^\#$.
\end{itemize}
We write $\Strat_\sigma(v_0)$ for the set of all strategies of Player~$\sigma$
from $v_0$.
\edefn

In a strategy $\Ss=(W,F)$, the set $W$ is the part of $\Tt(\Gg,v_0)$ on which the strategy is defined, and
$F$ is the set of moves that are admitted by the strategy.
A strategy $\Ss\in\Strat_\sigma(v_0)$ induces the set $\Plays(\Ss)$ of those
plays from $v_0$ whose moves are consistent with $\Ss$. We call $\Ss$ well-founded if it
does not admit any infinite plays; this is always the case on finite acyclic 
game graphs, but need not be the case otherwise. 
The set of possible \emph{outcomes} of a strategy $\Ss$ is the set of terminal nodes
that are reachable by a play that is consistent with $\Ss$.
A strategy can
also be viewed as a function $\Ss: W\cap V_\sigma^\#\ra V$ such that $\Ss(\pi v)\in vE$
defines the node to which Player~$\sigma$ moves from $\pi v$.

\medskip
The simplest objectives of players are reachability and safety objectives.

\bdefn  A \emph{reachability objective} for Player~$\sigma$ is given by a set $T_\sigma\subseteq T$
of winning terminal positions. With such an objective, Player~$\sigma$ wins every play that reaches a position
in $T_\sigma$. Dually, a  \emph{safety objective} for Player~$\sigma$ is given by a set 
$L_\sigma\subseteq T$ of `losing' positions
that the player has to avoid, or equivalently, by its complement $S_\sigma=V\setminus L_\sigma$,
the region of safe positions inside of which the Player has to keep the play. With such an objective
Player~$\sigma$ wins every play, finite or infinite, that never reaches a position in $L_\sigma$.   
\edefn

Notice that the difference between reachability and safety objectives is relevant only
in cases where infinite plays are possible. 
Indeed, in a game that admits only finite plays, Player~$\sigma$ wins a play with the reachability objective 
$T_\sigma$ if, and only if, she wins that play with the safety objective given by $L_\sigma=T\setminus T_\sigma$,
so we can always reformulate reachability by safety and vice versa.
However, in a game that admits infinite plays, Player~$\sigma$ wins with a reachability objective 
$T_\sigma$ if, and only if, her opponent, Player~$1-\sigma$, loses with the safety condition $L_{1-\sigma}=T_\sigma$,
Hence winning with a reachability objective corresponds to defeating an opponent who plays with a safety objectives.
If both players play with reachability objectives, then infinite plays are won by neither player.

\section{Provenance for well-founded  games}\label{sect:wfgames}

We first study the provenance analysis of games for 
well-founded games, i.e. games that are played on \emph{finite acyclic 
game graphs} 
$\Gg=(V,V_0,V_1,T,E)$, and hence
do not admit infinite plays. 
We introduce \emph{$K$-valuations} $f_0$ and $f_1$ that associate with
every position $v\in V$ provenance values $f_0(v)$ and $f_1(v)$, respectively. 
The idea is that, for $\sigma\in\{0,1\}$, the function $f_\sigma$
describes the value of each position from the point if view of Player~$\sigma$.  
Such a valuation is induced by its values on the terminal positions,
i.e. by a function $f_\sigma:T\ra K$, and by a valuation of the moves,
i.e. by a function $h_\sigma:E\ra K\setminus\{0\}$.
Here, the function $f_\sigma:T\ra K$ 
defines the value, for Player~$\sigma$, of every \emph{terminal position} where,
intuitively, $f_\sigma(t)=0$ means that position $t$ is losing for Player $\sigma$.
In the simplest case, we can specify reachability objectives $T_\sigma$ by 
setting $f_\sigma(t)=1$ for $t\in T_\sigma$ and $f_\sigma(t)=0$ otherwise.
The functions $h_\sigma:E\ra K\setminus\{0\}$ 
provide a value (or cost) for Player~$\sigma$ of the moves. 
In many cases valuations of moves are not relevant; we then just put $h_\sigma(vw)=1$ for all 
edges $(v,w)\in E$.

The extension of the basic valuations $f_\sigma:T\ra K$ and $h_\sigma:E\ra K\setminus\{0\}$
to valuations $f_\sigma:V\ra K$ for all positions then relies on the idea that 
a move from $v$ to $w$ contributes to $f_\sigma(v)$ the  value
$h_\sigma(vw)\cdot f_\sigma(w)$. These contributions are summed up
in the case that $v$ is a position for Player~$\sigma$ (i.e. when she choses herself the successors), 
and  multiplied in the case that $v$ is a position of the opponent (i.e. when she has to
cope with any of the possible successors).  This is summarized by the following 
definition.

\begin{definition} \label{defn: acyclic K-valuation}
Let $K$ be a commutative semiring, let
$\Gg=(V,V_0,V_1,T,E)$ be a finite acyclic game graph, and let $\sigma\in\{0,1\}$
denote one of the two players.
A \emph{$K$-valuation} of $\Gg$ for Player~$\sigma$ is a function $f_\sigma:V\ra K$.
It is defined from basic valuations $f_\sigma:T\ra K$ and $h_\sigma:E\ra K\setminus\{0\}$
via backwards induction, by
\[   f_\sigma(v):=\begin{cases}
\sum_{w\in vE} h_\sigma(vw)\cdot f_\sigma(w)&\text{ if }v\in V_\sigma\\
\prod_{w\in vE} h_\sigma(vw)\cdot f_\sigma(w)&\text{ if }v\in V_{1-\sigma}.\end{cases}
\]
\end{definition}

An equivalent characterization of the $K$-valuation $f_\sigma$
can be obtained by defining provenance values for plays and strategies.

\begin{definition}
For a play $x=v_0v_1\dots v_m$ from $v_0$ to a terminal node $v_m$, we 
define its valuation for Player~$\sigma$ as
$f_\sigma(x):=h_\sigma(v_0v_1)\cdots h_\sigma(v_{m-1}v_m)\cdot f_\sigma(v_m)$. 
Let now $\Ss=(W,F)\subseteq \Tt(\Gg,v_0)$ be a strategy for Player~$\sigma$ from $v_0$ and 
$\rho_S: (W,F)\ra (V,E)$ be the restriction of of the canonical homomorphism
$\rho:\Tt(\Gg,v_0)\ra \Gg$ to $\Ss$. For any position $v\in V$ and
any move $e\in E$, the values
\[   \#_\Ss(v):=| \rho_\Ss^{-1}(v)|\quad\text{and}\quad \#_\Ss(e):=| \rho_\Ss^{-1}(e)|\]
indicate how often the position $v$ and the move $e$ appear in the strategy $\Ss$.
We then define the provenance value $\Ss\in\Strat_\sigma(v_0)$ as 
\[     F(\Ss):= \prod_{e\in E} h_\sigma(e)^{\#_\Ss(e)} \cdot \prod_{v\in T} f_\sigma(v)^{\#_\Ss(v)}.\]
\end{definition}

In some important special cases, provenance values of  strategies coincides with
the product of the provenance values over all plays that they admit. 

\begin{lemma}\label{strategy-value}  If $h_\sigma(e)=1$ for all moves $e\in E$, or if the underlying
semiring is multiplicatively idempotent (i.e. $a^2=a$ for all $a$) we have that  
$F(\Ss)= \prod_{x\in\Plays(\Ss)}   f_\sigma(x)$ for all $\Ss\in\Strat_\sigma(v)$.
\end{lemma}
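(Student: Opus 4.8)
The plan is to compute the right-hand side $\prod_{x\in\Plays(\Ss)} f_\sigma(x)$ directly and match it, factor by factor, against the definition of $F(\Ss)$. The structural fact I would use first is that, since $\Gg$ is finite and acyclic, the strategy $\Ss=(W,F)$ is a finite well-founded tree whose leaves are exactly the terminal nodes in $W$, and whose maximal paths, i.e.\ the elements of $\Plays(\Ss)$, are in bijection with these leaves: each play is the unique root-to-leaf path ending at its terminal, and each leaf determines one play. This lets me re-index the product over plays as a product over leaves of the strategy tree.

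Next I would expand each play value. A play $x\in\Plays(\Ss)$ corresponds to a root-to-leaf path $\pi$ in $\Ss$; along $\pi$ each tree-edge $e'\in F$ projects via $\rho_\Ss$ to a move of $\Gg$ carrying the factor $h_\sigma(\rho_\Ss(e'))$, and the leaf $\ell_x$ carries $f_\sigma$ of its terminal, so $f_\sigma(x)=\bigl(\prod_{e'\text{ on }\pi} h_\sigma(\rho_\Ss(e'))\bigr)\cdot f_\sigma(\rho_\Ss(\ell_x))$. Taking the product over all plays and interchanging the two products, which is legitimate since the semiring is commutative, a fixed tree-edge $e'\in F$ contributes the factor $h_\sigma(\rho_\Ss(e'))$ once for every play whose path passes through $e'$; the number of such plays is the number of leaves in the subtree below $e'$, which I denote $L(e')$, and $L(e')\ge 1$ because in a well-founded tree every edge lies below at least one leaf. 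The leaf factors regroup into $\prod_{v\in T} f_\sigma(v)^{\#_\Ss(v)}$, as each play contributes the terminal value of its own leaf and $\#_\Ss(v)$ counts the leaves projecting to $v$. This gives
\[
\prod_{x\in\Plays(\Ss)} f_\sigma(x) = \Bigl(\prod_{e'\in F} h_\sigma(\rho_\Ss(e'))^{L(e')}\Bigr)\cdot\prod_{v\in T} f_\sigma(v)^{\#_\Ss(v)}.
\]
Since $\prod_{e\in E} h_\sigma(e)^{\#_\Ss(e)}=\prod_{e'\in F} h_\sigma(\rho_\Ss(e'))$ by grouping the tree-edges according to their image under $\rho_\Ss$, the terminal factors already match $F(\Ss)$ and the whole statement reduces to the term-wise identity $h_\sigma(\rho_\Ss(e'))=h_\sigma(\rho_\Ss(e'))^{L(e')}$ for each $e'\in F$.

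It is exactly at this last step that the two hypotheses enter. If $h_\sigma(e)=1$ for every move, each such factor is $1$ and both sides collapse to $\prod_{v\in T} f_\sigma(v)^{\#_\Ss(v)}$. If instead the semiring is multiplicatively idempotent, then $a^2=a$ forces $a^n=a$ for all $n\ge 1$ by an immediate induction, so $h_\sigma(\rho_\Ss(e'))^{L(e')}=h_\sigma(\rho_\Ss(e'))$ using $L(e')\ge 1$; either way the exponents collapse from $L(e')$ to $1$ and the two expressions coincide. I expect the only delicate point to be the bookkeeping in the interchange of products, specifically keeping the two distinct counts apart: $\#_\Ss(e)$, the number of tree-edges lying over a given move, versus $L(e')$, the number of plays running through a given tree-edge. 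The gap between the exponent $L(e')$ and the exponent $1$ is precisely what the two hypotheses are there to close, so isolating it cleanly is the crux of the argument.
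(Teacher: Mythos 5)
Your proof is correct. Note that the paper states Lemma~\ref{strategy-value} \emph{without proof} --- it proceeds directly to the counterexample showing the hypotheses are necessary --- so there is no proof of record to compare against; your argument supplies the verification the authors treated as routine, and by what is surely the intended route. The bookkeeping is sound: all products are finite since the game graph is finite and acyclic, commutativity justifies the interchange, the leaves of the strategy tree are indeed exactly the nodes over terminal positions (non-terminal nodes always have children by the definition of a strategy), and you keep the two counts properly separate: $\#_\Ss(e)$, the number of tree edges of $\Ss$ over a move $e$, which governs the exponents in $F(\Ss)$, versus $L(e')$, the number of leaves below a tree edge $e'$, which governs the exponents in $\prod_{x\in\Plays(\Ss)} f_\sigma(x)$. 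Your reduction to the term-wise identity $h_\sigma(\rho_\Ss(e'))^{L(e')}=h_\sigma(\rho_\Ss(e'))$, using $L(e')\ge 1$, isolates exactly the right crux; it is also precisely what drives the paper's counterexample immediately after the lemma, where a single move of value $a$ lies on two plays of the unique strategy, so the play product gives $a^2$ while $F(\Ss)=a$.
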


However, there are simple games where this is not the case. Consider, for instance, the valuation for
Player~0 in a game where only the opponent, Player~1, moves: from position $v$,
Player ~1 can proceed to $w$ by a move with value $h_0(vw)=a$, and from $w$
he has the choice of moving to either $s$ or $t$, both options having value 1 for Player 0.
There is only one strategy $\Ss$ for Player 0 (do nothing), with provenance value $a$.
However, the strategy admits two plays, ending in $s$ and $t$, respectively, 
both of which have value $a$. Thus the product over
the provenance value of the plays is $a^2$.

\begin{theorem}\label{thm:prov-games}
For any commutative semiring $K$ and any finite acyclic game $\Gg$, 
let $f_\sigma:V\ra K$ be the provenance valuation 
for Player~$\sigma$, induced by the valuation $f_\sigma:T\ra K$ of the terminal nodes and $h_\sigma: E\ra K\setminus\{0\}$
of the moves. Then, for every position $v$
\[    f_\sigma(v)   =\sum_{\Ss\in\Strat_\sigma(v)}\  F(\Ss).\]
\end{theorem}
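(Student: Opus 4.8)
The plan is to prove the identity
\[
f_\sigma(v) = \sum_{\Ss \in \Strat_\sigma(v)} F(\Ss)
\]
by backwards induction on the finite acyclic game graph, i.e.\ by induction on the (well-founded) rank of the position $v$. The base case treats terminal positions; the two inductive cases treat positions of Player~$\sigma$ and positions of the opponent, matching the two cases in Definition~\ref{defn: acyclic K-valuation}. The main conceptual work is to set up a \emph{correspondence between strategies from $v$ and the combinatorial data that the recursion for $f_\sigma(v)$ ranges over}, so that the algebra of $+$ and $\cdot$ in the semiring mirrors the way strategies are assembled from strategies at the successors.

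\emph{Base case.} If $v \in T$, then $\Tt(\Gg,v)$ is a single node, so $\Strat_\sigma(v)$ contains exactly one strategy $\Ss$, whose tree is the single node $v$. Then $\#_\Ss(v)=1$, $\#_\Ss(w)=0$ for $w\neq v$, and $\#_\Ss(e)=0$ for all moves $e$, so $F(\Ss)=f_\sigma(v)$, agreeing with the definition.

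\emph{Inductive step, Player~$\sigma$'s positions.} Suppose $v\in V_\sigma$ with successors $vE=\{w_1,\dots,w_k\}$. A strategy $\Ss\in\Strat_\sigma(v)$ must, by Definition~\ref{def:strategy}, choose exactly one successor $w_i$ and then play a strategy from $w_i$; formally $\Ss$ is determined by the choice of $i$ together with a strategy $\Ss'\in\Strat_\sigma(w_i)$ grafted below the edge $(v,w_i)$. The counting functions satisfy $\#_\Ss(e)=\#_{\Ss'}(e)$ for moves inside the subtree, $\#_\Ss(v\,w_i)=1$, and $\#_\Ss(u)=\#_{\Ss'}(u)$ for terminal positions, so $F(\Ss)=h_\sigma(v\,w_i)\cdot F(\Ss')$. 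Summing over all choices and applying the induction hypothesis $f_\sigma(w_i)=\sum_{\Ss'\in\Strat_\sigma(w_i)}F(\Ss')$ gives
\[
\sum_{\Ss\in\Strat_\sigma(v)} F(\Ss)
= \sum_{i=1}^{k} h_\sigma(v\,w_i)\cdot \sum_{\Ss'\in\Strat_\sigma(w_i)} F(\Ss')
= \sum_{i=1}^{k} h_\sigma(v\,w_i)\cdot f_\sigma(w_i),
\]
which is exactly $f_\sigma(v)$.

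\emph{Inductive step, opponent's positions.} Suppose $v\in V_{1-\sigma}$ with $vE=\{w_1,\dots,w_k\}$. Now Definition~\ref{def:strategy} forces $(v)F=(v)E^\#$, so a strategy from $v$ must respond to \emph{every} successor: it is a tuple $(\Ss_1,\dots,\Ss_k)$ with $\Ss_i\in\Strat_\sigma(w_i)$, grafted below the respective edges. Here the counting is additive across the disjoint subtrees, $\#_\Ss(e)=\sum_i\#_{\Ss_i}(e)$ and $\#_\Ss(v\,w_i)=1$, so $F(\Ss)=\prod_{i=1}^{k} h_\sigma(v\,w_i)\cdot F(\Ss_i)$. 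Using distributivity of $\cdot$ over $+$ to expand the product of sums, the sum over all tuples factors as a product of sums, and the induction hypothesis yields
\[
\sum_{\Ss\in\Strat_\sigma(v)} F(\Ss)
= \prod_{i=1}^{k}\Bigl( h_\sigma(v\,w_i)\cdot \sum_{\Ss_i\in\Strat_\sigma(w_i)} F(\Ss_i)\Bigr)
= \prod_{i=1}^{k} h_\sigma(v\,w_i)\cdot f_\sigma(w_i),
\]
which equals $f_\sigma(v)$.

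\emph{The main obstacle} I expect is bookkeeping rather than depth: one must verify carefully that the graft/decomposition of strategies is a genuine bijection $\Strat_\sigma(v)\leftrightarrow$ (choices $\times$ sub-strategies) in the $V_\sigma$ case and $\Strat_\sigma(v)\leftrightarrow\prod_i\Strat_\sigma(w_i)$ in the opponent case, and that the multiplicities $\#_\Ss(\cdot)$ combine exactly as claimed so that the monomial $F(\Ss)$ factors correctly. The crucial algebraic point is that distributivity in an \emph{arbitrary} commutative semiring suffices to turn the product-of-sums over the opponent's successors into the sum over all strategy-tuples; no cancellation, idempotence, or positivity is needed, which is why the theorem holds for every commutative $K$. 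One should also note that acyclicity guarantees finiteness of $\Tt(\Gg,v)$ and well-foundedness of the rank, so all sums and products are finite and the induction is justified.
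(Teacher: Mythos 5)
Your proof is correct and follows essentially the same route as the paper's: backwards induction on positions, decomposing a strategy at a position of Player~$\sigma$ as a choice of successor plus a sub-strategy, and at an opponent's position as a tuple of sub-strategies grafted below the edges, with the same bookkeeping for $\#_\Ss(\cdot)$ and the same use of distributivity to turn the product of sums into a sum over strategy tuples. Your write-up is in fact slightly more explicit than the paper's on the base case and on the bijectivity of the decomposition, but the argument is the same.
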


\lueg  For terminal positions $v$ the claim is trivially true. So suppose that 
$v\in V_\sigma$. Then any strategy 
$\Ss\in\Strat_\sigma(v)$ can be written in the form $\Ss=v\cdot\Ss'$ for some successor $w\in vE$
and some strategy $\Ss'\in\Strat_\sigma(w)$. 
Clearly,  $\#_\Ss(t)=\#_{\Ss'}(t)$ for every terminal position $t\in T$.
For the moves we have that $\#_\Ss(e)=\#_{\Ss'}(e)$ for all $e\neq (v,w)$ but
$\#_S(e)=1$ and $\#_{\Ss'}(e)=0$ for $e=(v,w)$. 
This implies that $F(S)=h(vw)\cdot F(\Ss')$.
By induction hypothesis 
$f_\sigma(w)=\sum_{\Ss'\in\Strat_\sigma(w)}\ F(\Ss')$. Hence 
\[  f_\sigma(v)=\sum_{w\in vE} h_\sigma(vw)\cdot f_\sigma(w)=
\sum_{w\in vE} \sum_{\Ss'\in\Strat_\sigma(w)}  h_\sigma(vw)\cdot F(\Ss')
=  \sum_{\Ss\in\Strat_\sigma(v)}  F(\Ss).
\]
Finally, let $v\in V_{1-\sigma}$ with $vE=\{w_1,\dots,w_n\}$.  Every strategy $\Ss\in\Strat_\sigma(v)$ has the form 
$\Ss=v(\Ss_1\cup\dots\cup \Ss_n)$ with $\Ss_i\in\Strat_\sigma(w_i)$.
For the terminal nodes $t\in T$ we have that $\#_\Ss(t)=\sum_{i\leq n}\#_{\Ss_i}(t)$; similarly, for all moves $e$ 
from a different position than $v$, we have $\#_\Ss(e)=\sum_{i\leq n}\#_{\Ss_i}(e)$, but for the
moves $e=(v,w_i)$ we have $\#_\Ss(e)=1$ and $\#_{\Ss_i}(e)=0$ for all $i$.
Thus $F(\Ss)= \prod_{w_i\in vE}  h_\sigma(vw_i)\cdot F(\Ss_i))$.
 It follows that 
\begin{align*}   f_\sigma(v)&=\prod_{w_i\in vE} h_\sigma(vw_i)\cdot f_\sigma(w_i)=
\prod_{w_i\in vE} h_\sigma(vw_i)\cdot \sum_{\Ss_i\in\Strat_\sigma(w_i)} F(\Ss_i)\\
&=  \sum_{v\cdot(\Ss_1\cup\dots \Ss_n)\in\Strat_\sigma(v)} \prod_{w_i\in vE} h_\sigma(vw_i)\cdot F(\Ss_i)) =
\sum_{\Ss\in\Strat_\sigma(v)}\    F(\Ss).
\end{align*}
\hx

From this description, we can derive a number of applications of provenance 
valuations on games.
We first consider the information provided by valuations in the general
provenance semirings of polynomials.
Let $\N[T]$ be the semiring of polynomials with coefficients in $\N$
over indeterminates $t\in T$, where $T$ is the set of terminal positions
in an acyclic  game graph $\Gg=(V,V_0,V_1,T,E)$.
Let $f_\sigma:V\ra\N[T]$ be the  valuation induced by
setting $f_\sigma(t)=t$ for $t\in T$ and $h_\sigma(vw)=1$ for all edges $(v,w)$,
so that the value of a play is just its outcome, i.e. the terminal position
where it ends.

Clearly, we can write $f_\sigma(v)$ as a sum of monomials 
$m\cdot t_1^{j_1}\dots t_k^{j_k}$. This provides a detailed description of
the number and properties of the strategies that Player~$\sigma$ has from
position $v$.

\begin{theorem} The valuation $f_\sigma(v)\in\N[T]$ is the sum of
those monomials $m\cdot t_1^{j_1}\dots t_k^{j_k}$ (with $m,j_1\dots,j_k>0$)  
such that Player~$\sigma$ has precisely $m$ strategies $\Ss\in\Strat_\sigma(v)$
with the property that the set of possible outcomes for $\Ss$ is precisely $\{t_1,\dots,t_k\}$, and 
precisely $j_i$ plays that are consistent with $\Ss$ have the outcome $t_i$.
\end{theorem}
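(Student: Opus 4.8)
The plan is to build directly on Theorem~\ref{thm:prov-games}, which already tells us that $f_\sigma(v)=\sum_{\Ss\in\Strat_\sigma(v)} F(\Ss)$. With the specific valuation chosen here ($f_\sigma(t)=t$ and $h_\sigma(vw)=1$), the general formula for a strategy value collapses: since every edge value is $1$, the factor $\prod_{e\in E} h_\sigma(e)^{\#_\Ss(e)}$ disappears, and we are left with $F(\Ss)=\prod_{t\in T} t^{\#_\Ss(t)}$. So my first step is to read off what $\#_\Ss(t)$ means combinatorially: because $\rho_\Ss^{-1}(t)$ is the set of leaves of the strategy subtree that project onto the terminal position $t$, and each such leaf corresponds to exactly one play in $\Plays(\Ss)$ ending at $t$, the exponent $\#_\Ss(t)$ is precisely the number of plays consistent with $\Ss$ whose outcome is $t$. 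This is the crux of translating the algebra into the counting statement.

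Next I would fix notation for a single strategy. For $\Ss\in\Strat_\sigma(v)$, let $\{t_1,\dots,t_k\}$ be the set of terminal positions $t$ with $\#_\Ss(t)>0$; by the definition of "set of possible outcomes" this is exactly the outcome set of $\Ss$. Writing $j_i:=\#_\Ss(t_i)>0$, the previous step gives $F(\Ss)=t_1^{j_1}\cdots t_k^{j_k}$, a monic monomial in which the support records the outcome set and the exponents record the per-outcome play counts. Thus each strategy contributes exactly one monic monomial, and the data $(\{t_1,\dots,t_k\},(j_1,\dots,j_k))$ attached to that monomial is exactly the data described in the theorem.

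The final step is to collect terms. Grouping the sum $\sum_{\Ss} F(\Ss)$ by the value of the monomial $F(\Ss)$, a monomial $m\cdot t_1^{j_1}\cdots t_k^{j_k}$ (with $m,j_1,\dots,j_k>0$) appears in $f_\sigma(v)$ with coefficient $m$ if and only if there are exactly $m$ strategies $\Ss$ whose associated monomial is $t_1^{j_1}\cdots t_k^{j_k}$ --- that is, exactly $m$ strategies with outcome set $\{t_1,\dots,t_k\}$ and with exactly $j_i$ plays of outcome $t_i$ for each $i$. This is the assertion of the theorem.

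The main obstacle, and the one deserving the most care, is the identification $\#_\Ss(t)=\#\{\text{plays in }\Plays(\Ss)\text{ ending at }t\}$. This rests on the fact that in a well-founded strategy the elements of $\rho_\Ss^{-1}(t)\subseteq W$ are exactly the terminal leaves of the subtree $\Ss$, and that finite paths from $v$ to a leaf are in bijection with the plays in $\Plays(\Ss)$. Since the game graph is finite and acyclic, $\Ss$ is finite and has no infinite plays, so every play terminates at a leaf and the bijection is clean; I would state this explicitly rather than leave it implicit, as it is the one place where the tree-unraveling formalism of Definition~\ref{def:strategy} does real work. Everything else is bookkeeping that follows from the two earlier definitions and Theorem~\ref{thm:prov-games}.
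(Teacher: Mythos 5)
Your proposal is correct and follows essentially the same route as the paper, which dispatches the theorem as an immediate consequence of Theorem~\ref{thm:prov-games} and Lemma~\ref{strategy-value}. Your explicit identification $\#_\Ss(t)=|\{x\in\Plays(\Ss):x\text{ has outcome }t\}|$, via the bijection between leaves of the strategy subtree and plays, is precisely the content of Lemma~\ref{strategy-value} in the special case $h_\sigma\equiv 1$, so you have merely unfolded the citation rather than taken a different path.
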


This is an immediate consequence of Theorem~\ref{thm:prov-games} and Lemma~\ref{strategy-value}.
In many cases, somewhat less detailed information is sufficient, which can
be obtained by valuations in less informative provenance semirings than $\N[T]$:
\begin{itemize}
\item Evaluating $f_\sigma(v)$ in the idempotent semiring $\B[T]$ gives us the sum
of monomials $t_1^{j_1}\dots t_k^{j_k}$ for which Player~$\sigma$ has
at least one strategy whose multiset of admitted outcomes consists
of $t_1,\dots,t_k$ with multiplicities $j_1,\dots,j_k$, respectively.
\item If we evaluate $f_\sigma(v)$ in $\W[T]$ we get the sum of
monomials $t_1\dots t_m$ such that Player~$\sigma$ has
a strategy whose set of outcomes is $\{t_1,\dots t_m\}$. The information
on multiplicities of strategies and outcomes is dropped.
\item An interesting case is the evaluation in the absorptive semiring
$\S[X]$.  For two strategies $\Ss,\Ss'\in\Strat_\sigma(v)$, we say that 
$\Ss$ \emph{absorbs} $\Ss'$ if  for every 
terminal position $t\in T$, $\Ss$ admits less plays with outcome $t$
than $\Ss'$. We call $\Ss$ \emph{absorption-dominant} if it is
not absorbed by any other strategy. Now, $f_\sigma(v)\in\S[X]$ is
the sum of monomials $t_1^{j_1}\dots t_k^{j_k}$ that describe 
precisely the (multiset of outcomes of the) absorption dominant strategies
of Player~$\sigma$ from $v$. See Sect.~\ref{sect:strategy-absorption} below
for a more detailed analysis of absorption among strategies.
\item Finally, the evaluation of $f_\sigma(v)\in\PosBool[T]$ consists of
those monomials $t_1\dots t_k$ such that $\{t_1,\dots,t_k\}$
a minimal set among the sets of outcomes of strategies $\Ss\in\Strat_\sigma(v)$. 
\end{itemize}

Fix any reachability objective $W\subseteq T$. In any of these provenance semirings,
we can write the polynomial $f_\sigma(v)$ as a sum $f_\sigma(v)=f^W_\sigma(v)+g^W_\sigma(v)$ where 
$f^W_\sigma(v)$ is the sum of those monomials that only contain indeterminates in $W$ 
and $g^W_\sigma(v)$ contains the rest. 

\begin{theorem} For every subset $W\subseteq T$ and every $v\in V$, Player $\sigma$ has a strategy to
reach $W$ from $v$ if, and only if, $f^W_\sigma(v)\neq 0$ (in any of the provenance semirings given above). 
Moreover, if  we set $f(t)=1$ for $t\in W$ and $f(t)=0$ for $t\in T\setminus W$, and evaluate $f_\sigma$ in the semiring 
$\N$ of natural numbers,  then $f_\sigma(v)$ 
is the number of distinct winning strategies for Player $\sigma$ to reach $W$ from $v$.
\etheo  

Evaluation in other application semirings gives further interesting information about strategies:

\medskip\noindent{\bf Cost of strategies. }
Given a game $\Gg$, we associate with Player~0 \emph{cost functions}
$f_0: T \ra  \R_+$ and $h:E\ra \R_+$ for the terminal positions and the moves.  
We define the cost of a strategy $\Ss\in\Strat_0(v)$ as the sum of the costs of all 
moves and outcomes  that it admits, weighted by the number of their occurrences.

\begin{proposition}
The \emph{cost of an optimal strategy} from $v$ in $\Gg$ is
given by the valuation $f_0(v)$ in the tropical semiring $\Trop=(\R_+^\infty,\min,+,\infty,0)$.
\end{proposition}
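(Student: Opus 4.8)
The plan is to instantiate the general strategy-decomposition of Theorem~\ref{thm:prov-games} in the concrete case $K=\Trop$ and then simply read off what the abstract semiring operations mean arithmetically. First I would observe that the cost functions $f_0:T\ra\R_+$ and $h:E\ra\R_+$ are exactly the data of a basic valuation for Player~0 with values in $\Trop=(\R_+^\infty,\min,+,\infty,0)$: since $0_\Trop=\infty$, finiteness of the move costs guarantees $h(e)\in\R_+=\R_+^\infty\setminus\{0_\Trop\}$, as required of a move valuation. Hence the induced valuation $f_0:V\ra\Trop$ from Definition~\ref{defn: acyclic K-valuation} is well defined, and Theorem~\ref{thm:prov-games} applies to give
\[ f_0(v)=\sum_{\Ss\in\Strat_0(v)} F(\Ss), \]
where the outer sum and the products inside $F(\Ss)$ are all taken in $\Trop$.

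Next I would translate $F(\Ss)$ into ordinary arithmetic. In $\Trop$ the semiring product $\cdot$ is ordinary addition $+$, so an $n$-fold power $a^n$ becomes the ordinary scalar multiple $n\cdot a$ (consistently with $a^0=1_\Trop=0$). Substituting these into the definition $F(\Ss)=\prod_{e\in E} h(e)^{\#_\Ss(e)}\cdot\prod_{t\in T} f_0(t)^{\#_\Ss(t)}$ turns both tropical products into ordinary sums and yields
\[ F(\Ss)=\sum_{e\in E}\#_\Ss(e)\cdot h(e)+\sum_{t\in T}\#_\Ss(t)\cdot f_0(t), \]
which is precisely the cost of $\Ss$, namely the sum of the costs of all moves and outcomes it admits, each weighted by its number of occurrences $\#_\Ss(\cdot)$.

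Finally, since the semiring sum in $\Trop$ is $\min$ and the set $\Strat_0(v)$ is finite (the game graph is finite and acyclic, so every strategy is a finite tree and there are only finitely many), the displayed identity from Theorem~\ref{thm:prov-games} reads $f_0(v)=\min_{\Ss\in\Strat_0(v)} \mathrm{cost}(\Ss)$, with the minimum attained by some strategy; that strategy is an optimal one, and $f_0(v)$ is its cost, as claimed. I expect no genuine obstacle here: everything reduces to the universal decomposition already proved in Theorem~\ref{thm:prov-games}, and the only point needing a moment's care is the bookkeeping that exponentiation in $\Trop$ corresponds to scalar multiplication, so that the abstract provenance value $F(\Ss)$ coincides on the nose with the combinatorial cost of the strategy.
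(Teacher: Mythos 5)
Your proposal is correct and follows essentially the same route as the paper: instantiate Theorem~\ref{thm:prov-games} in $\Trop$, note that the tropical product (ordinary addition) makes $F(\Ss)$ equal to the combinatorial cost of the strategy, and that the tropical sum (minimization) turns the strategy decomposition into the minimum cost over all strategies. Your additional bookkeeping (tropical exponentiation as scalar multiplication, well-definedness since $h(e)\in\R_+\neq\infty=0_\Trop$, and attainment of the minimum by finiteness of $\Strat_0(v)$) only makes explicit what the paper leaves implicit.
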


\lueg  Since the product in $\Trop$ is addition in $\R_+^\infty$, the cost of a strategy $\Ss$
for Player~0, as defined above, coincides with the valuation $f_0(\Ss)$ in $\Trop$.
The summation in $\Trop$ is minimization in $\R_+^\infty$, so from 
Theorem~\ref{thm:prov-games} we get that
\[   f_0(v)= \min_{\Ss\in\Strat_0(v)} \   F(\Ss)\]
describes indeed the minimial cost of a strategy for Player~0 from position $v$. 
\hx

\medskip\noindent{\bf Clearance levels. } The access control semiring is 
$\Access=(\{\Pub<\Cnf<\Sec<\Tsec<0\},\min,\max,0,\Pub)$
where $\Pub$ is ``public'', $\Cnf$ is ``confidential'',
$\Sec$ is ``secret'', $\Tsec$ is ``top secret'', and $0$ is ``so secret that nobody can access it!''.
Let  $f_\sigma: T \ra  \Access$ and  $h_\sigma :E\ra \Access\setminus\{0\}$ define access
levels for the terminal positions and the moves for Player~$\sigma$, in the sense that
Player~$\sigma$ can make a move $e$ if, and only if, his personal clearance level is at least $h(e)$
and similarly, he can access a terminal position $t$ if, and only if, his clearance level is at least $f_\sigma(t)$.   

\begin{proposition}
The valuation $f_\sigma(v)\in \Access$ describes the \emph{minimal clearance level} that Player~0 needs
to win from position $v$, i.e. to have a strategy that guarantees to reach a terminal position that is accessible 
for him.
\end{proposition}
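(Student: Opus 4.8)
The plan is to follow exactly the pattern of the preceding proposition on the cost of strategies, replacing the tropical semiring $\Trop$ by the access control semiring $\Access$, which is likewise a min-max semiring and hence $(+,\cdot)$-idempotent. First I would record the two decisive algebraic facts about $\Access$: its addition is $\min$ and its multiplication is $\max$, with the additive identity $0$ being the top (most secret) element and the multiplicative identity $\Pub$ the bottom. In particular the product is idempotent, so that $a^n=a$ for every $n\geq 1$ while $a^0=\Pub$ is neutral for $\max$.

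Next I would compute $F(\Ss)$ in $\Access$. Because the exponents $\#_\Ss(e)$ and $\#_\Ss(v)$ collapse under the idempotent product, and positions or moves not occurring in $\Ss$ contribute only the neutral value $\Pub$, the definition of $F(\Ss)$ simplifies to
\[
F(\Ss)=\max\Bigl(\max_{e\in\Ss} h_\sigma(e),\ \max_{t} f_\sigma(t)\Bigr),
\]
where the second maximum ranges over the outcomes $t$ of $\Ss$. This is precisely the highest clearance level demanded by any move the player must make, or any terminal position he must be able to access, along $\Ss$; equivalently, it is the least clearance level at which $\Ss$ is executable and all of its outcomes are accessible, i.e. the least level at which $\Ss$ is a winning strategy. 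I would then invoke Theorem~\ref{thm:prov-games}: since addition in $\Access$ is $\min$,
\[
f_\sigma(v)=\sum_{\Ss\in\Strat_\sigma(v)} F(\Ss)=\min_{\Ss\in\Strat_\sigma(v)} F(\Ss),
\]
so that $f_\sigma(v)$ is the smallest, over all strategies, of the clearance level required to win with that strategy.

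The remaining, interpretive step is to argue that a player with clearance $\ell$ can guarantee to reach an accessible terminal position from $v$ if, and only if, there is a strategy $\Ss$ with $F(\Ss)\leq\ell$, which by the display holds iff $f_\sigma(v)\leq\ell$; hence $f_\sigma(v)$ is exactly the minimal clearance level needed to win. I would spend a sentence on the degenerate case: if every strategy admits an outcome $t$ with $f_\sigma(t)=0$ (inaccessible, losing), then $F(\Ss)=0$ for all $\Ss$ and $f_\sigma(v)=0$, correctly signalling that no clearance level suffices. I expect this interpretive step to be the only real obstacle — pinning down that ``winning at clearance level $\ell$'' means simultaneously \emph{all moves executable} and \emph{all outcomes accessible}, and checking that this two-sided condition is matched faithfully by the single $\max/\min$ computation, including the boundary behaviour of the top element $0$. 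The algebraic manipulation itself is routine once the idempotence of $\max$ is used to discard the exponents.
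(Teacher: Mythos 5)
Your proposal is correct, but it takes a different route from the paper, which disposes of this proposition with the single remark that ``the proof is a straightforward induction'' --- that is, a backwards induction over the acyclic game graph: at a node of Player~$\sigma$ the minimal clearance needed is the $\min$ over successors of $\max(h_\sigma(vw),\,\text{clearance needed at }w)$, at an opponent node it is the corresponding $\max$, and these recurrences are literally the semiring equations of Definition~\ref{defn: acyclic K-valuation} in $\Access$. You instead derive the statement globally from Theorem~\ref{thm:prov-games}: since $\Access$ is multiplicatively idempotent, the exponents in $F(\Ss)$ collapse (this is exactly the situation of Lemma~\ref{strategy-value}), so $F(\Ss)$ becomes the maximum clearance demanded along $\Ss$, and the sum over strategies becomes a minimum. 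Both arguments are sound; the paper's induction is shorter and purely local, while your strategy-level computation buys an explicit interpretation of each $F(\Ss)$ as ``the least level at which $\Ss$ is executable and winning,'' which makes the interpretive equivalence (clearance $\ell$ suffices iff $f_\sigma(v)\leq\ell$) transparent and also handles the degenerate value $0$ cleanly. Two small points of care: the tropical semiring is \emph{not} a min-max semiring (its multiplication is $+$, not idempotent), so the phrase ``likewise a min-max semiring'' is misleading --- idempotence is precisely what your proof uses and the cost-of-strategies proof does not need; and your gloss ``any move the player must make'' should be read as ranging over \emph{all} moves occurring in $\Ss$, including the opponent's, which is what your formula $\max_{e\in\Ss}h_\sigma(e)$ and the paper's valuation both do.
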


The proof is a straightforward induction.

\medskip\noindent{\bf Confidence in games. }
Suppose that $f_\sigma: T\ra [0,1]$ describes the confidence that Player~$\sigma$ puts into $t$ being a 
winning position for her. We want to compute \emph{confidence scores} $f_\sigma(v)$ to describe the
confidence of Player~$\sigma$ that she can win from $v$. It is natural to define the confidence score $f_\sigma(v)$
as the \emph{maximum} of the confidence scores of the successors $w\in vE$ 
in the case that $v\in V_\sigma$.
For confidence scores of combinations of events whose choice is taken by an opponent,
such as for the possible moves from a position $v\in V_{1-\sigma}$,  there
are different approaches in the literature. A popular one, with which we work here,  
takes the \emph{product} of the confidence scores of the events from which the opponent
choses. Adopting this definition, the following proposition is immediate.

\begin{proposition}
Confidence scores are computed as semiring valuations $f_\sigma: V\ra \V$ in the
Viterbi semiring $\V=([0,1], \max,\,\cdot\, ,0,1)$.
\end{proposition}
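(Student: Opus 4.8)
The plan is to prove the statement by a straightforward backwards induction on the acyclic game graph, matching the inductive structure of Definition~\ref{defn: acyclic K-valuation} against the inductive definition of the confidence score that is sketched informally in the surrounding text. The key observation is that the Viterbi semiring $\V=([0,1],\max,\cdot,0,1)$ has exactly the two operations that the informal definition of confidence calls for: semiring addition is $\max$, which is precisely how we combine the confidence scores of the successors $w\in vE$ when $v\in V_\sigma$ (Player~$\sigma$ chooses, so she takes the best option), and semiring multiplication is ordinary product, which is precisely the chosen aggregation rule for positions $v\in V_{1-\sigma}$ controlled by the opponent. So the proposition should reduce to the remark that the recursive clauses of Definition~\ref{defn: acyclic K-valuation}, when instantiated with $K=\V$ and with all move values $h_\sigma(vw)=1$, literally coincide with the recursive clauses defining the confidence score.

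Concretely, first I would fix the intended confidence score $c_\sigma:V\to[0,1]$ as the function defined by backwards induction with $c_\sigma(t)=f_\sigma(t)$ on terminal positions $t\in T$, with $c_\sigma(v)=\max_{w\in vE} c_\sigma(w)$ for $v\in V_\sigma$, and with $c_\sigma(v)=\prod_{w\in vE} c_\sigma(w)$ for $v\in V_{1-\sigma}$; this is the function the informal discussion above describes. The induction is well-founded because $\Gg$ is finite and acyclic, so there is a well-founded rank on positions given by the length of the longest play remaining. Then I would show by induction on that rank that $f_\sigma(v)=c_\sigma(v)$, where $f_\sigma$ is the $\V$-valuation of Definition~\ref{defn: acyclic K-valuation} with $h_\sigma\equiv 1$. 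The base case $v\in T$ is immediate since both functions take the value $f_\sigma(t)$. For the inductive step with $v\in V_\sigma$, the defining clause gives $f_\sigma(v)=\sum_{w\in vE} 1\cdot f_\sigma(w)=\max_{w\in vE} f_\sigma(w)$, which by the induction hypothesis equals $\max_{w\in vE} c_\sigma(w)=c_\sigma(v)$; the step for $v\in V_{1-\sigma}$ is identical with $\sum$ replaced by $\prod$ and $\max$ replaced by product.

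There is really no hard obstacle here: the content of the proposition is the design decision that $\V$ is the right semiring, and once the operations are identified the induction is mechanical. The one point deserving a sentence of care is that the general $K$-valuation carries move values $h_\sigma$, whereas the confidence scenario as described attaches confidence only to terminal positions; I would therefore state explicitly that we take $h_\sigma(vw)=1$ for all moves, so that the factors $h_\sigma(vw)$ drop out (being the multiplicative identity of $\V$) and the two recursions align exactly. One could alternatively derive the result from Theorem~\ref{thm:prov-games}, reading $f_\sigma(v)=\sum_{\Ss\in\Strat_\sigma(v)} F(\Ss)=\max_{\Ss} F(\Ss)$ as the confidence of the best strategy, but the direct induction is shorter and matches the ``immediate'' claim in the text. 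This explains why the authors simply state the proposition without a detailed proof.
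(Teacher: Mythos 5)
Your proof is correct and follows exactly the route the paper intends: the paper declares the proposition ``immediate'' after adopting the max/product definition of confidence, and your backwards induction (with $h_\sigma\equiv 1$ so the move factors vanish) is precisely the mechanical unfolding of that remark, since semiring addition and multiplication in $\V$ are $\max$ and $\cdot$. Your explicit note about setting the move values to $1$ is a sensible clarification but does not change the argument.
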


\medskip\noindent{\bf Min-Max Games. }
Finally note that valuations in a min-max semiring $(A,\max,\min,a,b)$
describe the value of positions in games 
where Player 0 tries to maximize and Player~1 tries to minimize the outcome of the play.

\medskip\noindent{\bf Separating Valuations. }
The $K$-valuations $f_0,f_1$ for the two players in a game $\Gg$, as defined 
by Definition~\ref{defn: acyclic K-valuation}, are \emph{a priori} completely independent of each other. 
This admits the treatment of a wide variety of games, without
any restrictions on how the objectives of the two players relate to each other.
For instance, in a completely cooperative game, the basic valuations of of the terminal positions
would be the same for Player~0 and Player~1. However, in many games, the objectives of the two players
are antagonistic, and valuations $f_0$ and $f_1$ should reflect this. This motivates the following definition.

\bdefn Let $\Gg$ be a game graph, with valuations $f_0,f_1$ for the two players in a semiring $K$, 
and let $U\subseteq V$ be a set of positions.
We say that 
\begin{enum}
\item $f_0,f_1$ for the two players are \emph{separating} on $U$ if for all $u\in U$, either $f_0(u)=0$ or  $f_1(u)=0$. 
\item $f_0, f_1$ are \emph{weakly separating} on $U$ if $f_0(u)f_1(u)=0$ for all $u\in U$. Notice that in the case where $K$ has no 
divisors of 0, weakly separating valuations are in fact separating.
\item $f_0$ and $f_1$ are \emph{strongly separating} on $U$, if they are separating, and in addition, 
$f_0(u)+f_1(u)\neq 0$ for all $u\in U$.
\end{enum}
\edefn

\prop If two valuations $f_0$ and $f_1$ are (weakly) separating on the the terminal positions of $\Gg$,
then they are (weakly) separating on all positions of $\Gg$.
\eprop

\lueg Recall that all our semirings are assumed to be +-positive. For $v\in V_\sigma$, we have that
\[    f_\sigma(v)=\sum_{w\in vE} h(vw)f_\sigma(w) \text{ and } f_{1-\sigma}(v) =\prod_{w\in vE} h(vw)f_{1-\sigma}(w).\]
It follows that $f_0$ and $f_1$ are separating on $v$ if they are separating on all $w\in vE$. Further,
\begin{align*}
f_\sigma(v)f_{1-\sigma}(v)=&\Bigl(\sum_{w\in vE} h_\sigma(vw)f_\sigma(w)\Bigr)\Bigl(\prod_{w\in vE} h_{1-\sigma}(vw)f_{1-\sigma}(w)\Bigr)=\\
&\sum_{w\in vE}\Bigl( h_\sigma(vw)f_\sigma(w)\prod_{w'\in vE} h_{1-\sigma}(vw')f_{1-\sigma}(w')\Bigr)=\\
&\sum_{w\in vE}\bigl( h_\sigma(vw)h_{1-\sigma}(vw)f_\sigma(w)f_{1-\sigma}(w)\prod_{w'\in vE\setminus\{w\}} h_{1-\sigma}(vw') f_{1-\sigma}(w')\bigr).\end{align*}
This proves that $f_0$ and $f_1$ are weakly separating on $v$ if they are so on all $w\in vE$.
\hx

The corresponding implication for strongly separating valuations does not hold for all +-positive semirings,
but it holds for positive ones.

\prop If two valuations $f_0$ and $f_1$ into a positive semiring are strongly separating on the the terminal positions of $\Gg$,
then they are so on all positions of $\Gg$.
\eprop

\lueg
By induction. Assume that $f_0$ and $f_1$ are strongly separating on all $w\in vE$.
Then $f_\sigma(v)+f_{1-\sigma}(v)=0$ only if $f_\sigma(w)=0$ for all $w\in vE$ and $f_{1-\sigma}(w)=0$
for at least one $w\in vE$. But this implies that $f_0(w)+f_1(w)=0$ for some $w\in vE$ 
which contradicts our assumption.
\hx
 
Note that for the Boolean semiring $K=\B$, this is just Zermelo's Theorem on the determinacy of reachability games on well-founded
game graphs: from every position, one of the two players has a winning strategy.

\medskip\noindent{\bf Counting positional winning strategies? }
A strategy is \emph{positional} if it only depends on the 
current position, and not on the history of the play, i.e. if $\Ss(\pi v)=\Ss(\pi' v)$
for all $v$ and all paths $\pi v$, $\pi'v$ that lead to $v$. A positional 
strategy can be described by a function $s: V_\sigma\ra V$ or by a subgraph
$\Ss$ of $\Gg$ (rather than of $\Tt(\Gg,v_0)$).

Given that in the study of games there is (for instance for algorithmic reasons)
a strong interest in positional strategies, it is reasonable to ask whether there
exist valuations in different semirings that count just the positional strategies.
However, invariance under counting bisimulation shows that this is not possible.

\bdefn
Let $\Gg=(V,V_0,V_1,T, E)$ and $\Gg'=(V',V'_0,V'_1,T', E')$  be two game graphs.
A \emph{counting bisimulation} between $\Gg$ and $\Gg'$ is a relation $Z\subseteq V\times V$
such that for every pair $(v,v')\in Z$ we have that
\begin{enum}
\item $v\in V_\sigma$ if, and only if, $v'\in V'_\sigma$ and $v\in T$ if, and only if, $v'\in T'$, and
\item there is a local bijection  $z_{vv'}: vE \ra v'E'$ between the
immediate successors of $v$ and $v'$ such that $(w,z_{vv'}(w))\in Z$, for every $w\in vE$.
\end{enum}
We write $\Gg,v \sim \Gg',v'$ if there is a counting bisimulation $Z$ between $\Gg$ and $\Gg'$
such that $(v,v')\in Z$. Notice that for any game graph $\Gg$, the relation
$Z=\{(v,\pi v): v\in V, \pi v\in V^\#\}$ is a counting
bisimulation between $\Gg$ and its unraveling $\Tt(\Gg,v_0)$.
\edefn

$K$-valuations of games are invariant under counting bisimilarity in the following sense.
Let $\Gg$ and $\Gg'$ be two acyclic game graphs with $K$-valuations $f_\sigma:T\ra K$ and 
$f'_\sigma: T'\ra K$ of the terminal positions and $h:E\ra K$ and $h':E'\ra K$ of the 
moves. We say that a counting bisimulation $Z\subseteq V\times V'$ respects these
valuations if $f_\sigma(t)=f'_\sigma(t')$ for all $(t,t')\in Z\cap T\times T'$, and
$h_\sigma(vw)=h'_\sigma(v'w')$ whenever $(v,v')\in Z$ and $(w,w')\in Z$.

\prop Let $Z$ be a counting bisimulation between $\Gg$ and $\Gg'$ that
respects the basic valuations of the terminal positions and the moves.
Then $Z$ respects the valuations of all positions, i.e.
$f_\sigma(v)=f'_\sigma(v')$ for all $(v,v')\in Z$.
\eprop

\lueg Let $(v,v')\in Z$. If $v$ and $v'$ are terminal positions, then $f_\sigma(v)=f'_\sigma(v')$ by
assumption. Otherwise, $v$ and $v'$ are both positions of the same player.
If they belong to Player~$\sigma$, then $f_\sigma(v)=\sum_{w\in vE} f_\sigma(w)$.
The local bijection $z_{vv'}$ maps every $w\in vE$ to some $w'\in v'E'$ such that,
by induction hypothesis, $f_\sigma(w)=f'_\sigma(w')$. Hence 
$f'_\sigma(v')=\sum_{w'\in v'E'} f'_\sigma(w')=\sum_{w\in vE} f_\sigma(w)=f_\sigma(v)$.
If $v$ and $v'$ belong to Player~($1-\sigma$) the reasoning is completely
analogous, taking a product rather than a sum.
\hx

In particular $K$-valuations of acyclic games do not change if we replace
a game graph $\Gg$ by one of its unravelings $\Tt(\Gg,v)$.
Indeed, every valuation $f_\sigma:T\ra K$ on the terminal positions of
a game graph $\Gg$ extends to the same valuation for $v$ on $\Gg$ as on
the tree unraveling $\Tt(\Gg,v)$.
On the other side, every strategy on a tree-shaped game graph is positional.
Thus the number of positional winning strategies is certainly not invariant
under unraveling and hence not definable by valuations in a semiring.

\section{Provenance for first-order logic via model checking games and dual-indeterminate polynomials}

Given a finite relational vocabulary $\tau$  and
a finite  non-empty universe $A$, we denote by $\Atoms_A(\tau)$ the set of all 
atoms $R\bar a$ with $R\in \tau$ and $\bar a\in A^k$.
Further, let $\NegAtoms_A(\tau)$ be the set of all negated atoms
$\neg R\bar a$ where $R\bar a\in\Atoms_A(\tau)$, and consider the set of all $\tau$-literals on $A$,
\[ \Lit_A(\tau):= \Atoms_A(\tau)\cup \NegAtoms_A(\tau)\cup\{ a \op b :  a, b \in A\},\]
where $\op$ stands for $=$ or $\neq$.

\bdefn Given any commutative semiring $K$, a \emph{$K$-interpretation} (for $\tau$ and $A$) is
a function $\pi: \Lit_A(\tau)\ra K$ that maps equalities and inequalities to their truth values 0 or 1.
\edefn

We have defined in \cite{GraedelTan17} how a semiring interpretation extends to a full valuation 
$\pi: \FO(\tau)\ra K$ mapping any fully instantiated formula $\psi(\bar a)$ (or equivalently, any first-order 
sentence of vocabulary $\tau\cup A$),  to a value $\pi\ext{\psi}$, by setting
\begin{align*}
&\pi\ext{\psi\lor\phi} := \pi\ext{\psi}+\pi\ext{\phi)} \qquad 
&&\pi\ext{\psi\land\phi} := \pi\ext{\psi} \cdot \pi\ext{\phi}\\
&\pi\ext{\E x \phi(x)}:=\sum_{a\in A} \pi\ext{\phi(a)}\qquad &&\pi\ext{\A x \phi(x)}:=\prod_{a\in A} \pi\ext{\phi(a)}.
\end{align*}
Negation is handled via negation normal forms: we set
$\pi\ext{\neg \phi}:=\pi\ext{\nnf(\neg \phi)}$ where $\nnf(\phi)$ is the negation normal form of $\phi$.
   
This is equivalent to the game provenance, as defined above, for the model checking
game associated with the formula $\psi$ and the $K$-interpretation $\pi:\Lit_A(\tau)\ra K$.
Notice that classically, model checking games are defined for a formula 
(assumed to be given in negation normal form) and a fixed structure $\AA$
(see e.g. \cite[Chap.~4]{AptGraedel11}).
However, the game graph of such a model checking game depends only on
the formula $\psi$ and the \emph{universe} $A$ of the given structure $\AA$.
It is only the labelling of the terminal positions of the game, as winning 
for either the Verifier (Player~0) or the Falsifier (Player~1), that depends
on which of the literals in $\Lit_A(\tau)$ are true in $\AA$.
Hence the definition of a model checking game readily generalizes 
to our more abstract provenance scenario. 

\bdefn\label{FOgame} Let $\psi(\bar x)\in\FO(\tau)$ be a first-order formula in negation normal form
with a relational vocabulary $\tau$, and let $A$ be a (finite) universe.
The model checking game $\Gg(A,\psi)$ has
positions $\phi(\bar a)$, obtained from a subformula $\phi(\bar x)$ of $\psi$, 
by instantiating the free variables $\bar x$ by a tuple $\bar a$ of elements of $A$. 
At a disjunction $(\psi\lor\phi)$, Player~0 (Verifier) moves to either $\psi$ or $\phi$,
and at a conjunction, Player~1 (Falsifier) makes an analogous move.
At a position $\E x\phi(\bar a,x)$, Verifier selects an element $b$ and moves
to $\phi(\bar a,b)$, whereas at positions $\A x \phi(\bar a,x)$ the move to
to the next position $\phi(\bar a,b)$ is done by Falsifier.  
The terminal positions of $\Gg(A,\psi)$ are the literals in $\Lit_A(\tau)$.
\edefn

A $K$-interpretation $\pi:\Lit_A(\tau)\ra K$ thus provides a valuation
of the set $T\subseteq\Lit_A(\tau)$ of terminal positions of the model checking game $\Gg(A,\psi)$,
for any sentence $\psi\in\FO(\tau\cup A)$. We view it as a valuation
$f_0$ for Player~0. The associated valuation $f_1$ for Player 1
is obtained by setting $f_1(\phi)=\pi\ext{\neg\phi}$ for
any literal $\phi\in\Lit_A(\tau)$.
Both valuations then extend to full valuations $f_0$ and $f_1$ of all positions
of $\Gg(A,\psi)$, including the position $\psi$ itself. The following result
is proved by a straightforward induction on formulae.

\begin{theorem} For all positions $\phi$ of $\Gg(A,\psi)$ we have that 
$f_0(\phi)=\pi\ext{\phi}$ and $f_1(\phi)=\pi\ext{\neg\phi}$.
\end{theorem}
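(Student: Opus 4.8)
The plan is to prove the two equalities $f_0(\phi)=\pi\ext{\phi}$ and $f_1(\phi)=\pi\ext{\neg\phi}$ \emph{simultaneously} by induction on the structure of the subformula $\phi$ of $\psi$, which is exactly backwards induction along the finite acyclic game graph $\Gg(A,\psi)$. Proving both statements together is essential, not a convenience: the inductive step for $f_1$ at a position owned by one player draws on the inductive hypothesis for $f_1$ at positions owned by the other player, so neither equality can be established on its own.

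For the base case, $\phi$ is a literal in $\Lit_A(\tau)$, hence a terminal position. There $f_0(\phi)=\pi\ext{\phi}$ holds because the Verifier's basic terminal valuation is defined to be $\pi$, and $f_1(\phi)=\pi\ext{\neg\phi}$ holds by the stipulation that the Falsifier's basic terminal valuation is $f_1(\phi):=\pi\ext{\neg\phi}$.

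For the inductive step I would treat the four connectives of negation normal form in turn, using the correspondence built into Definition~\ref{FOgame} between the owner of a position and whether the valuation sums or multiplies over successors (recall that all moves carry value $1$). Take a disjunction $\phi=\psi_1\lor\psi_2$, a Verifier position, so $\phi\in V_0$. Then $f_0$ sums, giving $f_0(\phi)=f_0(\psi_1)+f_0(\psi_2)$, which by the induction hypothesis equals $\pi\ext{\psi_1}+\pi\ext{\psi_2}=\pi\ext{\phi}$. Dually, from the Falsifier's viewpoint $\phi$ is an opponent position, so $f_1$ multiplies: $f_1(\phi)=f_1(\psi_1)\cdot f_1(\psi_2)=\pi\ext{\neg\psi_1}\cdot\pi\ext{\neg\psi_2}$ by hypothesis. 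It then remains only to note that $\nnf(\neg\phi)=\nnf(\neg\psi_1)\land\nnf(\neg\psi_2)$, whence $\pi\ext{\neg\phi}=\pi\ext{\neg\psi_1}\cdot\pi\ext{\neg\psi_2}$, matching. The conjunction, existential, and universal cases are entirely symmetric: in each, the owner of the position is the player whose valuation sums, the opponent's valuation multiplies, and the De~Morgan rewriting inside $\nnf(\neg\phi)$ interchanges the two connectives so that the duality closes.

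The computations are routine, so there is no genuine obstacle; the single point demanding care is the bookkeeping that must align three dualities at once: the ownership of a position (Verifier versus Falsifier), the semiring operation each valuation applies (sum versus product), and the De~Morgan transformation effected by $\nnf$ on $\neg\phi$. Verifying that these three line up consistently across all four cases is the crux, and once the disjunction case is written out the remaining cases follow by the evident symmetry.
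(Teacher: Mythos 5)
Your proof is correct and matches the paper's approach exactly: the paper simply states that the result ``is proved by a straightforward induction on formulae,'' and that straightforward induction is precisely what you carry out, with the base case given by the definitions $f_0(\phi)=\pi(\phi)$ and $f_1(\phi)=\pi\ext{\neg\phi}$ on literals and the inductive step aligning position ownership, sum versus product, and the De~Morgan dualization inside $\nnf$. One minor quibble: your claim that the simultaneous induction is \emph{essential} is overstated, since the recursion defining $f_1$ refers only to $f_1$-values of successor positions (the player who owns a position merely determines whether $f_1$ sums or multiplies there), so each of the two equalities can in fact be established by its own standalone induction; proving them together is a harmless convenience, not a logical necessity.
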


Although this theorem holds without any restrictions on the semiring $K$ and the
$K$-interpretation $\pi$, not all such $K$-interpretations are really meaningful
for logic. Indeed the provenance value of complementary literals $R\bar a$ and
$\neg R\bar a$ have to be related in a reasonable way, and as a consequence 
also the general provenance semirings of polynomials need to be modified. 
In the simplest case a $K$-interpretation defines a unique $\tau$-structure.

\bdefn  A semiring interpretation $\pi: \Lit_A(\tau)\ra K$ is \emph{model-defining} if
for every atom $\phi\in\Atoms_A(\tau)$ one of $\pi(\phi)$ and $\pi(\neg\phi)$
is 0, and the other is $\neq 0$. It uniquely defines the $\tau$-structure $\AA_\pi$ that has
universe $A$, and in which precisely those literals $\phi$ are true for which $\pi(\phi)\neq 0$. 
\edefn

Notice that if $K$ is not the Boolean semiring, then several different $K$-interpretations may define
the same structure. Further, $K$-interpretations are interesting, and have a number of applications, also
in cases where they do not specify a single model, see \cite{GraedelTan17} and the references given there.

\topic{Dual-Indeterminate Polynomials}
Let $X,\nnX$ be two disjoint sets together with a one-to-one
correspondence $X\leftrightarrow\nnX$. We denote by $p\in X$ and
$\nnp\in\nnX$ two elements that are in this correspondence.  We refer
to the elements of $X\cup\nnX$ as \emph{provenance tokens} 
and we shall use ``positive''  and ``negative" tokens $p$ and $\nnp$ 
to annotate atoms $R\bar a\in\Atoms_A(\tau)$ and 
negated atoms $\neg R\bar a\in\NegAtoms_A(\tau)$, respectively.
By convention, if we annotate $R(\bar a)$ with $p$ 
then the ``negative'' token $\nnp$ can only be
used to annotate $\neg R(\bar a)$, and vice versa. 
We refer to $p$ and $\nnp$ as \emph{complementary} tokens.

\bdefn \label{dual-indeterminate} The semiring $\N[X,\nnX]$ is the quotient
of the semiring of polynomials $\N[X\cup\nnX]$ by the
congruence generated by the equalities $p\cdot\nnp=0$ for
all $p\in X$. This is the same as quotienting by the ideal generated 
by the polynomials $p\nnp$ for all $p\in X$.
Observe that two polynomials $g,g'\in\N[X\cup \nnX]$
are congruent if, and only if, they become identical after deleting from each of them
the monomials that contain complementary tokens. Hence, the 
congruence classes in $\N[X,\nnX]$ are in one-to-one correspondence
with the polynomials in $\N[X\cup\nnX]$ such that none of their monomials contain 
complementary tokens. We shall call these \emph{dual-indeterminate polynomials}.
\edefn

Note that $\N[X,\nnX]$ is $+$-positive and root-integral, but not positive,
since it has divisors of $0$. Further, we have the following \emph{universality property}: 
\begin{proposition}
\label{prop:prov-univ}
Every function $f:X\cup\nnX\rightarrow K$ into any commutative
semiring $K$ with the property that 
$f(p)\cdot f(\nnp)=0$ for all $p\in X$ extends uniquely
to a semiring homomorphism 
$h:\N[X,\nnX]\rightarrow K$ that coincides with $f$ on $X\cup\nnX$.
\end{proposition}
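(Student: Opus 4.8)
The plan is to establish the universal (free) property of the quotient semiring $\N[X,\nnX]$ by factoring through the already-known universality of the polynomial semiring $\N[X\cup\nnX]$ and then checking that the congruence we quotiented by is respected by any admissible $f$. Since $\N[X\cup\nnX]$ is freely generated by $X\cup\nnX$ (this is the first bullet of the provenance-semirings list, applied to the set $X\cup\nnX$), every function $f:X\cup\nnX\to K$ extends uniquely to a semiring homomorphism $\hat f:\N[X\cup\nnX]\to K$. The whole argument reduces to showing that $\hat f$ descends to the quotient, and that the induced map is the unique homomorphism on $\N[X,\nnX]$ agreeing with $f$.

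First I would recall that $\N[X,\nnX]$ comes with the canonical quotient homomorphism $q:\N[X\cup\nnX]\to\N[X,\nnX]$, whose kernel congruence is generated by the relations $p\nnp\sim 0$ for $p\in X$. By the standard universal property of quotients by a congruence, $\hat f$ factors as $h\circ q$ for a (necessarily unique) homomorphism $h:\N[X,\nnX]\to K$ \emph{provided} $\hat f$ itself identifies congruent elements, i.e.\ $\ker q\subseteq\ker\hat f$. Because $\ker q$ is the congruence \emph{generated} by the pairs $(p\nnp,0)$, it suffices to check that each generating pair lies in $\ker\hat f$: and indeed $\hat f(p\nnp)=\hat f(p)\cdot\hat f(\nnp)=f(p)\cdot f(\nnp)=0=\hat f(0)$, which is exactly the hypothesis $f(p)f(\nnp)=0$. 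This is the one place where the admissibility condition on $f$ is used.

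The resulting $h:\N[X,\nnX]\to K$ is a semiring homomorphism, and on the generators it satisfies $h([p])=h(q(p))=\hat f(p)=f(p)$ and likewise $h([\nnp])=f(\nnp)$, so $h$ coincides with $f$ on $X\cup\nnX$ (identifying each token with its congruence class). For uniqueness, suppose $h'$ is any semiring homomorphism with $h'=f$ on $X\cup\nnX$. Every element of $\N[X,\nnX]$ is a finite sum of products of generators (a congruence class of a polynomial), so $h'$ is determined on all of $\N[X,\nnX]$ by its values on the generators together with the homomorphism conditions $h'(a+b)=h'(a)+h'(b)$, $h'(ab)=h'(a)h'(b)$, $h'(0)=0$, $h'(1)=1$. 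Since $h$ and $h'$ agree on the generators and both respect $+,\cdot,0,1$, an immediate induction on the structure of (a representative of) each class gives $h=h'$.

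The argument is essentially routine once it is set up through $q$; the only genuine content is the descent step, and \emph{the main point to get right} is that the explicit description of the congruence in Definition~\ref{dual-indeterminate}---``delete monomials containing complementary tokens''---is really the congruence generated by the single family of relations $p\nnp\sim 0$, so that checking those generators suffices. Concretely, one should verify that every monomial killed by the quotient (i.e.\ one containing some complementary pair $p,\nnp$) is sent to $0$ by $\hat f$: such a monomial is divisible by $p\nnp$, hence its $\hat f$-image is a multiple of $f(p)f(\nnp)=0$, and equals $0$ because $0$ absorbs under multiplication. This confirms that $\ker\hat f$ contains the full congruence $\ker q$ and not merely the generating pairs, completing the factorization.
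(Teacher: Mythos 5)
Your proof is correct. The paper in fact states Proposition~\ref{prop:prov-univ} without any proof, treating it as standard; your argument---extend $f$ to $\hat f$ on the free commutative semiring $\N[X\cup\nnX]$, observe that the generating pairs $(p\nnp,0)$ lie in $\ker\hat f$ by the hypothesis $f(p)\cdot f(\nnp)=0$, and invoke the universal property of quotients by congruences to descend to $h:\N[X,\nnX]\rightarrow K$, with uniqueness coming from surjectivity of the quotient map and freeness---is exactly the standard argument the paper implicitly relies on. Your final paragraph, checking that every monomial containing a complementary pair is divisible by some $p\nnp$ and hence sent to $0$ by $\hat f$, correctly reconciles the explicit description of the congruence in Definition~\ref{dual-indeterminate} (deleting monomials with complementary tokens) with the abstract factorization through the generated congruence; that is the only point where a genuine gap could have appeared, and you closed it.
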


\bdefn
A \emph{provenance-tracking} interpretation is a mapping 
$\pi:\Lit_A(\tau)\rightarrow X\cup\nnX\cup\{0,1\}$ such that
$\pi(\Atoms_A(\tau))\subseteq X\cup\{0,1\}$
and $\pi(\NegAtoms_A(\tau))\subseteq\nnX\cup\{0,1\}$. Further,
$\pi$ maps equalities and inequalities to their truth values 0 or 1.
\edefn

The idea is that if $\pi$ annotates a positive or negative atom
with a token, then we wish to track that literal through the model-checking
computation. On the other hand annotating with $0$ or $1$ is done when
we do not track the literal, yet we need to recall whether it holds or not
in the model. See \cite{GraedelTan17} for more details and potential applications
of provenance-tracking interpretations.

\section{Semirings of dual-indeterminate power series and least fixed point solutions}

It is known that the general properties of commutative semirings  are not sufficient to deal with
unbounded iterations as they occur in fixed-point logic. Even for Datalog, one of the simplest
fixed-point formalism that omits the complications arising with universal quantification and
negation, appropriate semirings have the additional property of being \emph{$\omega$-continuous}.
The general $\omega$-continuous provenance semirings are no longer semirings of polynomials,
but semirings of formal power series, such as $\N^\infty[\![X]\!]$. We combine this here
with our approach for dealing with  negation by taking quotients with respect to
the congruence generated by products $p\bar p$ of positive and negative provenance tokens.  
What we obtain are $\omega$-continuous provenance semirings of dual-indeterminate power series,
such as $\N^\infty[\![X,\bar X]\!]$, as well as idempotent, absorptive, and other variants thereof. 
 
\medskip
A semiring $K$ is \emph{naturally ordered} if the relation $a\leq b:\Leftrightarrow \E x (a+x=b)$
is a partial order.  Note that  this relation is reflexive and
transitive in every semiring, but it is not always antisymmetric.
An $\omega$-chain is a sequence $(a_i)_{i\in\omega}$ with $a_i\leq a_{i+1}$ for all $i\in\omega$.

\bdefn\label{def:omega-cont} A commutative semiring $K$ is \emph{$\omega$-continuous} if it is naturally ordered and satisfies the
following additional conditions:
\begin{itemize}
 \item Every $\omega$-chain $(a_i)_i\in\omega$ has a supremum $\sup_{i\in\omega} a_i$  in $K$.
 As a consequence, we have a well-defined infinite summation operator $\sum$, such that for every sequence
 $(b_i)_{i\in\omega}$,
 \[  \sum_{i\in\omega} b_i :=  \sup \{a_0+\cdots +a_n:  n\in \omega\}\]     
 \item For every sequence $(a_i)_{i\in\omega}$ in $K$, every $c\in K$, and every partition $(I_j)_{j\in J}$ of $\omega$,
 we have that $c\cdot \sum_{i\in\omega} a_i = \sum_{i\in\omega} c\cdot a_i$ and 
 $\sum_{j\in J} \sum_{i\in I_j} a_i= \sum_{i\in\omega} a_i$.
 \end{itemize}
In an $\omega$-continuous semiring we further have the Kleene star operation, 
$a^*:=\sum_{i\in\omega} a^i=\sup_{i\in\omega} (1+a+a^2+\cdots +a^i)$.
A function $f:K\ra K$ is $\omega$-continuous if, $\sup_{i\in\omega} f(a_i)= f (\sup_{i\in\omega} a_i)$
for every $\omega$-chain $(a_i)_{i\in\omega}$. A consequence of the definition
is that any function defined by a polynomial or a power series is 
$\omega$-continuous in each argument.
\edefn

\bdefn Given a semiring $K$ and a finite set $X$ of indeterminates, we denote by $K[\![X]\!]$
the semiring of formal power series (i.e. possibly infinite sums of monomials) with coefficients in $K$ and 
indeterminates in $X$, with addition and multiplication defined in the obvious way.
If $K$ is $\omega$-continuous and $|X|=n$, then every formal
power series $f\in K[\![X]\!]$ induces a well-defined function $f: K^n\ra K$
which is $\omega$-continuous in each argument.
Further, if $K$ is $\omega$-continuous, then so is $K[\![X]\!]$ \cite{Kuich97}.
\edefn

A \emph{system of power series} with indeterminates $X_1\dots,X_n$  is a 
sequence $G=(g_1\dots g_n)$ with $g_i\in K[\![X]\!]$ for each $i$. It induces a
function $G:K^n\ra K^n$ that is monotone in each argument. By Kleene's Fixed-Point Theorem
$G$ has a least fixed point $\lfp G$ which coincides with the supremum of the Kleene approximants
$G^k$, defined by $G^0= 0$,  $G^{k+1}= G( G^k)$, i.e.  $\lfp G=\sup_{k\in\omega} G^k$.
We also refer to $\lfp G$ as the \emph{least fixed-point solution} of the equation system 
\[   X_1=g_1(X_1,\dots,x_n),\dots,   X_n=g_n(X_1,\dots,X_n), \]
in short, $X=G(X)$.

\topic{Dual-indeterminate power series}
Semirings $K[\![X]\!]$ of power series turn out to be appropriate as 
general provenance semirings for (not necessarily acyclic) reachability games, 
without any further structure on the terminal nodes, as well as
for purely positive fixed-point formalisms, without negation even on the atomic level.
However, as soon as we want to deal with fixed-point logics with (atomic) negation
we again need to take quotients with respect to the congruence generated
by an appropriate correspondence $X\leftrightarrow\nnX$
between positive and negative tokens (with the same conventions as in Definition~\ref{dual-indeterminate}).

\bdefn The semiring $K[\![X,\nnX]\!]$ is the quotient
of the semiring of power series  $K[\![X\cup\nnX]\!]$ by the
congruence generated by the equalities $p\cdot\nnp=0$ for
all $p\in X$. The 
congruence classes in $K[\![X,\nnX]\!]$ are in one-to-one correspondence
with the power series in $K[\![X\cup\nnX]\!]$
such that none of their monomials contain complementary tokens. 
We call these \emph{dual-indeterminate power series}.
\edefn

Again we have a universality property.
\begin{proposition}
Every function $f:X\cup\nnX\rightarrow K$ into an $\omega$-continuous
semiring $K$ with the property that 
$f(p)\cdot f(\nnp)=0$ for all $p\in X$ extends uniquely
to an $\omega$-continuous semiring homomorphism 
$h:\N[\![X,\nnX]\!]\rightarrow K$ that coincides with $f$ on $X\cup\nnX$.
\end{proposition}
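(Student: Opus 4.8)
The plan is to mirror the proof of the polynomial universality property (Proposition~\ref{prop:prov-univ}): I would first obtain the desired map on the free object over $X\cup\nnX$ and then factor it through the defining quotient. The only genuinely new ingredient, compared with the polynomial case, is that the induced map must be shown to be \emph{$\omega$-continuous}, and not merely a semiring homomorphism.

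First I would set $Y:=X\cup\nnX$ and invoke that $\N^\infty[\![Y]\!]$ is the free $\omega$-continuous commutative semiring generated by $Y$; this is the $\omega$-continuous analogue of the freeness of $\N[Y]$ used in Proposition~\ref{prop:prov-univ}. It yields a \emph{unique} $\omega$-continuous homomorphism $\hat h\colon\N^\infty[\![Y]\!]\to K$ extending $f$. Let $q\colon\N^\infty[\![Y]\!]\to\N^\infty[\![X,\nnX]\!]$ be the quotient map onto the dual-indeterminate power series; it is a surjective homomorphism whose kernel congruence is generated by the pairs $(p\nnp,0)$, and it is $\omega$-continuous by the coefficientwise description of suprema discussed below. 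The hypothesis $f(p)\cdot f(\nnp)=0$ says precisely that $\hat h(p\nnp)=0=\hat h(0)$, so every generator of that congruence lies in the kernel congruence of $\hat h$; as the latter is itself a congruence, it contains the whole generated congruence. Hence $\hat h$ is constant on congruence classes and factors as $\hat h=h\circ q$ for a well-defined semiring homomorphism $h\colon\N^\infty[\![X,\nnX]\!]\to K$ that restricts to $f$ on $Y$. Uniqueness is then immediate: any $\omega$-continuous homomorphism $h'$ agreeing with $f$ on $Y$ gives an $\omega$-continuous map $h'\circ q$ extending $f$, so $h'\circ q=\hat h=h\circ q$ by the uniqueness clause for the free object, and surjectivity of $q$ forces $h'=h$.

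The hard part will be verifying that $h$ is $\omega$-continuous, since factoring an $\omega$-continuous map through a quotient need not, in general, preserve continuity. Here I would use the concrete description of the quotient (as in Definition~\ref{dual-indeterminate} and its power-series analogue): each class is represented by the unique power series obtained by deleting all monomials that contain complementary tokens, the natural order is the coefficientwise order, and suprema of $\omega$-chains are taken coefficientwise (each coefficient ranging over an $\omega$-chain in $\N^\infty$). Let $s$ be the section assigning to a class its canonical mixed-monomial-free representative, so that $q\circ s=\mathrm{id}$ and $h=\hat h\circ s$. Since $s$ alters no coefficients and suprema on both sides are coefficientwise, $s$ preserves suprema of $\omega$-chains: for an $\omega$-chain $(c_i)_i$ the series $\sup_i s(c_i)$ has zero coefficient on every mixed monomial, hence lies in the image of $s$ and equals $s(\sup_i c_i)$. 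Combining this with the $\omega$-continuity of $\hat h$ gives
\[
 h\bigl(\sup_i c_i\bigr)=\hat h\bigl(s(\sup_i c_i)\bigr)=\hat h\bigl(\sup_i s(c_i)\bigr)=\sup_i\hat h\bigl(s(c_i)\bigr)=\sup_i h(c_i),
\]
so $h$ is $\omega$-continuous, which completes the argument.
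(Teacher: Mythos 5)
The paper states this proposition without proof (just as it does its polynomial counterpart, Proposition~\ref{prop:prov-univ}), so there is no proof of record to compare yours against; your argument must stand on its own, and it does. The structure --- freeness of $\N^\infty[\![X\cup\nnX]\!]$ in the category of $\omega$-continuous commutative semirings with $\omega$-continuous homomorphisms (Kuich's theorem, which the paper's framework implicitly rests on and cites), factorization through the quotient because the kernel congruence of $\hat h$ contains the generating pairs $(p\nnp,0)$, and the section argument for $\omega$-continuity --- is sound. You also correctly identified the one point where the power-series case genuinely differs from the polynomial case: continuity of the induced map does not come for free from the factorization, and the canonical mixed-monomial-free section $s$ with $q\circ s=\mathrm{id}$ and $h=\hat h\circ s$ is the right device to obtain it.

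Two steps deserve to be made explicit. First, in the equality $\sup_i s(c_i)=s(\sup_i c_i)$ it is not enough that $\sup_i s(c_i)$ is mixed-monomial-free; you must also identify its congruence class as $\sup_i c_i$, which follows from the $\omega$-continuity of $q$ (which you asserted) together with $q\circ s=\mathrm{id}$, but as written you jump over this. Second, both you and the paper rely on the classes of the congruence generated by $p\nnp=0$ being in bijection with mixed-monomial-free power series; unlike the polynomial case this needs a word, since only finitely many congruence steps are available a priori. It works because $X$ is finite: the mixed part of any series splits as a finite sum $\sum_i p_i\nnp_i d_i$ with each $d_i$ a (possibly infinite) power series, and each summand is congruent to $0$ in a single step. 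Finally, note that the statement's $\N[\![X,\nnX]\!]$ must be read as $\N^\infty[\![X,\nnX]\!]$ --- coefficients in $\N$ alone do not yield an $\omega$-continuous semiring --- a correction you made silently and correctly by working with $\N^\infty[\![Y]\!]$ throughout.
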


\section{Provenance for reachability games with cycles}

We now extend our provenance approach to games that admit infinite plays. We assume that the game
graphs are finite, but no longer acyclic.
Given a valuation $f_\sigma:T\ra K$ in a semiring $K$ for the terminal nodes of a game graph $\Gg$, the rules defining 
valuations  for the other nodes have now to be read as an equation system $(G_\sigma)$ in indeterminates $X_v$ (for $v\in V$):
\begin{align*}
X_v &= f_\sigma(v) \quad\text{ for }v\in T\\ 
 (G_\sigma)\qquad\qquad X_v &= \sum_{w\in vE}  h_\sigma(vw)\cdot X_w \quad\text{ if }v\in V_\sigma \\
X_v &= \prod_{w\in vE}  h_\sigma(vw)\cdot X_w \quad\text{ if }v\in V_{1-\sigma}
\end{align*}

If we assume that the underlying semiring $K$ is $\omega$-continuous, then such a system $(G_\sigma)$ always has a least 
fixed-point solution $\lfp G_\sigma$, which can be computed as the limit of its Kleene approximants $G^n:V\ra K$, for $n\in\omega$.
These Kleene approximants can be seen as valuations in the unravellings $\Gg^n$ of the game $\Gg$ up to $n$ moves,
defined as follows.

Recall that, for every game graph  $\Gg=(V,V_0,V_1,T, E)$, and every initial  position $v_0\in V$,
we have  the \emph{tree unraveling} $\Tt(\Gg,v_0)=(V^\#,V_0^\#,V_1^\#,T^\#, E^\#)$ 
consisting of all finite paths from $v_0$, with the canonical projection 
$\rho:\Tt(\Gg,v_0)\rightarrow \Gg$ that maps every path $\pi v$ to its end point $v$.    

\bdefn 
Given $\Gg$ with basic valuations $f_\sigma:T\ra K$ and $h_\sigma: E\ra K\setminus\{0\}$ of the terminal positions and moves, 
the \emph{truncation} $\Gg^n=(V^{(n)},V_0^{(n)},V_1^{(n)},T^{(n)}, E^{(n)})$, for $n>0$, is the restriction 
of the union of the trees $\Tt(\Gg,v)$ (with $v\in V$) to paths of less than $n$ moves, and $\rho^n:\Gg^n\ra \Gg$
is the restriction of the canonical  homomorphism $\rho$ to $\Gg^n$.
Notice that the truncation induces new terminal nodes:  
\[ T^{(n)}:=\{\pi v\in V^{(n)}: v\in T\}\cup\{ \pi v  \in V^{(n)}: |\pi|=n-1, v\in V\setminus T\}.\]
In $\Gg^n$, we define the basic valuation of the moves, $h_\sigma^n:   E^{(n)}\ra K\setminus\{0\}$, in the obvious way,  by $h_\sigma^n(e):=h_\sigma(\rho^n(e))$. For the valuation of the terminal nodes $\pi v\in T^{(n)}$, we put
 $f^n_\sigma(\pi v)=f_\sigma(v)$ if $v\in T$, and  $f^n_\sigma(\pi v)=0$ otherwise, i.e. if $\pi v$ is an initial segment of a
 play in $\Gg$, with $n-1$ moves, that has not reached a terminal position in $T$.
\edefn

The games $\Gg^n$ are finite acyclic games, and the basic valuations extend to valuations $f^n_\sigma: V^{(n)}\ra K$
for all nodes of $\Gg^n$. By induction, it readily follows that, for all nodes $v$ of $\Gg$, the Kleene approximants $G^n$
of $(G_\sigma)$ coincide with these valuations.

\begin{lemma} \label{L1}  For all $n$ and all positions $v$ of $G$, we have that $G^n(v)=f^n_\sigma(v)$.
\end{lemma}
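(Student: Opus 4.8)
The plan is to argue by induction on $n$, exploiting that each truncation $\Gg^n$ is a \emph{finite acyclic} game, so that its valuation $f^n_\sigma$ is determined by the backward induction of Definition~\ref{defn: acyclic K-valuation}, and then to match this backward induction step-by-step with the Kleene iteration $G^{n+1}=G(G^n)$ of the system $(G_\sigma)$.

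For the base case $n=1$, the game $\Gg^1$ contains only the one-vertex paths $v$ for $v\in V$, all of which are terminal in $\Gg^1$; hence $f^1_\sigma(v)=f_\sigma(v)$ for $v\in T$ and $f^1_\sigma(v)=0$ otherwise. On the Kleene side, since $G^0=0$ we have $G^1=G(0)$, so $G^1(v)=f_\sigma(v)$ for $v\in T$, while for $v\in V_\sigma$ or $v\in V_{1-\sigma}$ the zero values of the successors force $G^1(v)=0$ (in the product case because $vE\neq\emptyset$ for non-terminal $v$ contributes a zero factor). Thus $G^1=f^1_\sigma$.

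For the inductive step the key point is the \emph{shift-invariance} of the unraveling. In $\Gg^{n+1}$ the root $v$ has as children the length-one paths $vw$ for $w\in vE$, and the subtree hanging below $vw$ is canonically isomorphic, via the map deleting the leading $v$, to the truncation $\Gg^n$ with root $w$. This isomorphism respects the basic valuations: the $\Gg^{n+1}$-move from $v$ to $vw$ has value $h_\sigma(vw)$ by definition of $h^{n+1}_\sigma$, a frontier node introduced by the truncation of $\Gg^{n+1}$ (one of maximal path length, valued $0$) corresponds to a frontier node of $\Gg^n$ (again valued $0$), and a genuine terminal corresponds to a genuine terminal carrying the same value $f_\sigma$. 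Consequently $f^{n+1}_\sigma(vw)=f^n_\sigma(w)$ for every $w\in vE$. Applying Definition~\ref{defn: acyclic K-valuation} at the root $v$ of $\Gg^{n+1}$ then gives $f^{n+1}_\sigma(v)=f_\sigma(v)$ when $v\in T$, and otherwise $f^{n+1}_\sigma(v)=\sum_{w\in vE}h_\sigma(vw)f^n_\sigma(w)$ or $\prod_{w\in vE}h_\sigma(vw)f^n_\sigma(w)$ according as $v\in V_\sigma$ or $v\in V_{1-\sigma}$. By the induction hypothesis $f^n_\sigma(w)=G^n(w)$, so these right-hand sides are exactly the defining clauses of $G(G^n)(v)=G^{n+1}(v)$, which closes the induction.

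The main obstacle is purely the bookkeeping of the shift isomorphism: one has to fix the off-by-one between the truncation depths of $\Gg^{n+1}$ and $\Gg^n$ and verify that the terminals created by truncation receive value $0$ on both sides (so that they do not interfere with genuine terminals). Once the identification of the child subtree of $\Gg^{n+1}$ below $vw$ with $\Gg^n$ rooted at $w$ is set up correctly, the remainder is a direct term-by-term comparison of the two recursions.
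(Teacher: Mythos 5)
Your proof is correct and matches the paper's approach: the paper merely asserts that Lemma~\ref{L1} "readily follows by induction," and your argument spells out exactly that induction, matching the Kleene iteration $G^{n+1}=G(G^n)$ against the backward-induction valuation of the acyclic truncations via the shift isomorphism between the subtree of $\Gg^{n+1}$ below $vw$ and $\Gg^n$ rooted at $w$. Your handling of the details the paper leaves implicit (the value $0$ on truncation-created terminals, and the use of $vE\neq\emptyset$ for non-terminal $v$ in the base case so that the product clause yields $0$) is accurate.
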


We denote the strategy space of Player~$\sigma$ from $v$ in $\Gg^n$ by $\Strat_\sigma^{(n)}(v)$.
Since the games $\Gg^n$ are acyclic, Theorem~\ref{thm:prov-games} applies.

\begin{lemma}  \label{L2} For every $n$, and every position $v$, $f^n_\sigma(v)= \sum_{\Tt\in\Strat_\sigma^{(n)}(v)} F(\Tt)$.
\end{lemma}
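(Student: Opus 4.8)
The plan is to combine the two preceding lemmas. Lemma~\ref{L1} tells us that the Kleene approximant $G^n$ equals the acyclic valuation $f^n_\sigma$ on the truncation $\Gg^n$, and Lemma~\ref{L2} rewrites this valuation as a sum over strategies in $\Gg^n$. Since $\Gg^n$ is a finite acyclic game, Theorem~\ref{thm:prov-games} applies directly to it, giving $f^n_\sigma(v)=\sum_{\Tt\in\Strat_\sigma^{(n)}(v)} F(\Tt)$. So the statement of Lemma~\ref{L2} is in fact an immediate instance of Theorem~\ref{thm:prov-games} applied to the acyclic game $\Gg^n$, together with the observation that the basic valuations $f^n_\sigma$ and $h^n_\sigma$ on $\Gg^n$ extend to $f^n_\sigma$ on all nodes exactly as in Definition~\ref{defn: acyclic K-valuation}.

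First I would make the applicability of Theorem~\ref{thm:prov-games} fully explicit. The truncation $\Gg^n$ is finite and acyclic by construction (all its paths have fewer than $n$ moves), and it carries the basic valuations $f^n_\sigma:T^{(n)}\ra K$ and $h^n_\sigma:E^{(n)}\ra K\setminus\{0\}$ described in the definition of truncation. The induced full valuation $f^n_\sigma:V^{(n)}\ra K$ is precisely the $K$-valuation of $\Gg^n$ in the sense of Definition~\ref{defn: acyclic K-valuation}, because the recursion defining $f^n_\sigma$ on non-terminal nodes of $\Gg^n$ is the same backwards induction (sum at Player-$\sigma$ nodes, product at opponent nodes). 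Hence Theorem~\ref{thm:prov-games} yields $f^n_\sigma(v)=\sum_{\Tt\in\Strat_\sigma^{(n)}(v)} F(\Tt)$ for every position $v$ of $\Gg^n$, and in particular for every $v\in V$ viewed as the root of $\Gg^n$.

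The one point that deserves care is the treatment of the \emph{new} terminal positions of $\Gg^n$, namely the nodes $\pi v$ with $|\pi|=n-1$ and $v\in V\setminus T$, which correspond to plays cut off before reaching a genuine terminal. These are assigned valuation $f^n_\sigma(\pi v)=0$. Since Theorem~\ref{thm:prov-games} is stated for \emph{arbitrary} valuations $f_\sigma:T\ra K$ of the terminal nodes, including those taking the value $0$, no additional argument is needed: the theorem simply counts all strategies $\Tt\in\Strat_\sigma^{(n)}(v)$, with those whose outcomes include cut-off nodes contributing a factor $0$ to $F(\Tt)$ and thereby vanishing from the sum. Thus the only genuinely contributing strategies are those that reach real terminals within $n-1$ moves, which is exactly what we want.

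There is no real obstacle here; the whole content of Lemma~\ref{L2} is that the acyclic theorem transfers to each truncation, and this is mostly a matter of bookkeeping. If I wanted to be fully self-contained rather than invoking the two lemmas as a black box, the mildly delicate part would be verifying that the strategy space $\Strat_\sigma^{(n)}(v)$ in $\Gg^n$, together with the counting coefficients $\#_\Tt(\cdot)$ and the value $F(\Tt)$, is precisely the data to which Theorem~\ref{thm:prov-games} refers when applied to $\Gg^n$ rather than to a standalone game graph. Since $\Gg^n$ is itself a bona fide finite acyclic game graph, this identification is direct, and the proof is complete once it is spelled out.
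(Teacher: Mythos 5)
Your proof is correct and is essentially the paper's own argument: the paper justifies Lemma~\ref{L2} with exactly the observation that each truncation $\Gg^n$ is a finite acyclic game graph carrying the basic valuations $f^n_\sigma$ and $h^n_\sigma$, so Theorem~\ref{thm:prov-games} applies verbatim. Your additional remark that the cut-off nodes are handled automatically because the theorem allows terminal valuations equal to $0$ is a correct and worthwhile piece of bookkeeping that the paper leaves implicit (note only that Lemma~\ref{L1} plays no role here, since the statement concerns $f^n_\sigma$ alone).
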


\medskip\noindent{\bf Valuations of plays and strategies in games with cycles. }
To generalize  Theorem~\ref{thm:prov-games} to reachability games with cycles, we first need to
extend the valuations of plays and strategies to such games.
As in Sect.~\ref{sect:wfgames} 
a \emph{finite} play $x=v_0v_1\dots v_m$ in $\Gg$ from $v_0$ to a terminal node $v_m\in T$
gets the valuation $f_\sigma(x)=h_\sigma(v_0v_1)\cdots h_\sigma(v_{m-1}v_m)\cdot f_\sigma(v_m)$.
The provenance value of an infinite play is defined to be 0. 
For a strategy $\Ss\in \Strat_\sigma(v)$, we put $F(\Ss):=0$
if $\Ss$ admits any infinite play. Hence
a strategy $\Ss$ can have a non-zero provenance value only when
it admits just finite plays. By K\"onig's Lemma, it then admits only a finite number
of plays, and putting, as in Sect.~\ref{sect:wfgames},
\[     F(\Ss):= \prod_{e\in E} h_\sigma(e)^{\#_\Ss(e)} \cdot \prod_{v\in T} f_\sigma(v)^{\#_\Ss(v)}\]
is well-defined for such strategies, as the values $\#_\Ss(e)$ and $ \#_\Ss(v)$ are finite,
for all $e\in E$ and $v\in T$.
Although the number of different strategies $\Ss\in\Strat_\sigma(v)$
may well be infinite, Theorem~\ref{thm:prov-games} generalizes to reachability games with cycles,
with a proof based on Kleene's fixed-point theorem, and the unravellings of $\Gg$ to finite acyclic games $\Gg^n$.

\medskip Notice that every strategy $\Ss\in\Strat_\sigma(v)$ for the original game $\Gg$ induces, in every game $\Gg^n$, 
a strategy $\Ss^{(n)}\in \Strat^{(n)}_\sigma(v)$ for the game $\Gg_n$.

\begin{lemma} \label{L3} For every strategy $\Ss\in\Strat_\sigma(v)$ in $\Gg$ with $F(S)\neq 0$ there exists some $n_\Ss<\omega$
such that
\begin{itemize}
\item $\Ss=\Ss^{(n)}$ for all $n\geq n_\Ss$,
\item $F(\Ss^{(m)})=0$ for all $m<n_\Ss$.
\end{itemize}
\end{lemma}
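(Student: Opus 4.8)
The plan is to extract from the hypothesis $F(\Ss)\neq 0$ a uniform bound on the length of the plays admitted by $\Ss$, and then to read off both bullets from how the truncations $\Gg^n$ interact with that bound. First I would recall that, by the convention fixed just before the lemma, $F(\Ss)\neq 0$ forces $\Ss$ to admit no infinite play. Since $\Gg$ is finite, the subtree $\Ss\subseteq\Tt(\Gg,v)$ is finitely branching, so by K\"onig's Lemma the absence of an infinite branch makes $\Ss$ a \emph{finite} tree. Let $\ell<\omega$ be the maximal number of moves occurring in a play $x\in\Plays(\Ss)$, and fix a play $x^\ast\in\Plays(\Ss)$ attaining this maximum. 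I would then set $n_\Ss:=\ell+1$.

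For the first bullet, take $n\geq n_\Ss$, so that $n-1\geq\ell$. Every play of $\Ss$ has at most $\ell\leq n-1$ moves, hence every node of $\Ss$ is a path of fewer than $n$ moves and is therefore already a node of $\Gg^n$; under the canonical identification of the short-path part of $\Tt(\Gg,v)$ with a subtree of $\Gg^n$, the strategy $\Ss$ sits inside $\Gg^n$ and coincides with the induced strategy $\Ss^{(n)}$. Moreover, since each play of $\Ss$ ends at an original terminal $v_m\in T$ at depth $\leq\ell\leq n-1$, none of its leaves is a truncation node, so the terminal valuations are inherited unchanged and indeed $\Ss=\Ss^{(n)}$ as claimed.

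For the second bullet, take $m<n_\Ss$, i.e.\ $m\leq\ell$, so $m-1<\ell$. The induced strategy $\Ss^{(m)}$ admits the restriction of $x^\ast$ to its first $m-1$ moves; this truncated play ends at a node $\pi v$ with $|\pi|=m-1$ and $v\in V\setminus T$ (it is a genuine non-terminal of $\Gg$ precisely because $x^\ast$ continues past depth $m-1$). By the definition of the truncation, $\pi v\in T^{(m)}$ is a new terminal with valuation $f^m_\sigma(\pi v)=0$, and it occurs in $\Ss^{(m)}$ with multiplicity $\#_{\Ss^{(m)}}(\pi v)\geq 1$. Evaluating $F(\Ss^{(m)})$ by the product formula (equivalently, by Lemma~\ref{L2} applied in the acyclic game $\Gg^m$) then yields a factor $f^m_\sigma(\pi v)^{\#_{\Ss^{(m)}}(\pi v)}=0$, and since $0\cdot a=0$ in every semiring, we conclude $F(\Ss^{(m)})=0$.

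The one step that needs genuine care — and which I expect to be the main obstacle — is the passage from ``$\Ss$ admits no infinite play'' to ``$\Ss$ has a finite maximal play length'', i.e.\ the application of K\"onig's Lemma to the strategy tree rather than to the finite game graph; once $\ell$ is in hand, both bullets reduce to bookkeeping about the depth at which $\Gg^m$ cuts plays off and to the fact that cut-off leaves carry valuation $0$.
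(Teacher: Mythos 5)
Your proof is correct and follows essentially the same route as the paper's: both arguments use K\"onig's Lemma to conclude that $F(\Ss)\neq 0$ forces all plays of $\Ss$ to be finite and of bounded length, take $n_\Ss$ determined by the maximal play length, and then obtain the first bullet from the containment of all plays of $\Ss$ in $\Gg^n$ and the second from the existence of an unfinished play, whose cut-off leaf carries valuation $0$ and kills the product $F(\Ss^{(m)})$. The only difference is that you spell out the bookkeeping (the choice $n_\Ss=\ell+1$ and the zero factor at the truncation node) that the paper leaves implicit.
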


\lueg  This readily follows from the fact that a strategy $\Ss$ with $F(\Ss)\neq 0$ admits only a finite number of plays, all of which are finite.
Let $n_\Ss$ be the maximal length of these plays. Then, for $n\geq n_\Ss$, all plays in $\Ss$ are already contained in $\Ss^{(n)}$.
For any $m<n_s$, the induced strategy admits an unfinished play, hence $F(\Ss^m)=0$.
\hx

Every strategy $\Tt\in\Strat_\sigma^{(n)}(v)$ can be obtained as the induced strategy of some $\Ss\in\Strat_\sigma(v)$,
such that $\Tt=\Ss^{(n)}$.  In general $\Ss$ is not uniquely determined by $\Tt$ and $n$. Nevertheless, we have the following.

\begin{lemma} \label{L4} For every position $v$ of $\Gg$ and every $n<\omega$, we have that in $\Gg^n$,
\[  \sum_{\Ss\in\Strat_\sigma(v)}   F(\Ss^{(n)})  = \sum_{\Tt\in\Strat^{(n)}_\sigma(v)}   F(\Tt).\]
\end{lemma}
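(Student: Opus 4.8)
The plan is to analyse the fibres of the map $\Ss\mapsto\Ss^{(n)}$ from $\Strat_\sigma(v)$ to $\Strat^{(n)}_\sigma(v)$, which is surjective by the remark preceding the lemma. Since the quantity $F(\Ss^{(n)})$ depends only on the image $\Ss^{(n)}$, I would regroup the left-hand sum according to this image, writing
\[
\sum_{\Ss\in\Strat_\sigma(v)}F(\Ss^{(n)})=\sum_{\Tt\in\Strat^{(n)}_\sigma(v)}\ \sum_{\Ss:\,\Ss^{(n)}=\Tt}F(\Tt),
\]
so that everything reduces to understanding, for each fixed $\Tt$, how its fibre $\{\Ss:\Ss^{(n)}=\Tt\}$ reproduces the single value $F(\Tt)$.

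The decisive observation is a dichotomy driven by the truncation valuation. Recall that $\Gg^n$ has two kinds of terminal nodes: genuine ones $\pi v$ with $v\in T$, carrying value $f^n_\sigma(\pi v)=f_\sigma(v)$, and artificial ones $\pi v$ with $|\pi|=n-1$ and $v\notin T$, carrying value $0$. Hence if a strategy $\Tt\in\Strat^{(n)}_\sigma(v)$ admits even one play ending at an artificial terminal, the product defining $F(\Tt)$ contains a factor $0$, so $F(\Tt)=0$. Consequently $F(\Tt)\neq 0$ forces $\Tt$ to be \emph{complete}: every play consistent with $\Tt$ reaches a genuine terminal of $\Gg$ after fewer than $n$ moves.

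The main step, and the place where I expect the real work, is to show that for such a complete $\Tt$ the fibre is a singleton. Any $\Ss\in\Strat_\sigma(v)$ with $\Ss^{(n)}=\Tt$ must agree with $\Tt$ on all paths of length less than $n$; but since every play of $\Tt$ has already terminated within $n-1$ moves, the leaves of this initial part are genuine terminals, which have no successors in $\Gg$. A strategy is closed, in that it must prescribe a move from every reachable non-terminal position, so $\Ss$ has nothing left to define beyond these terminated plays and therefore coincides with the canonical lift of $\Tt$. Thus the fibre of a complete $\Tt$ contains exactly one strategy, of value $F(\Tt)$.

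To conclude I would combine the two cases. For every $\Tt$ with $F(\Tt)=0$ the inner sum is a sum of zeros and contributes $0$, irrespective of the (possibly infinite) size of its fibre; for every $\Tt$ with $F(\Tt)\neq 0$ the inner sum collapses to the single summand $F(\Tt)$. Since $\Strat^{(n)}_\sigma(v)$ is finite, only finitely many $\Tt$ have $F(\Tt)\neq 0$, and by the singleton property only finitely many $\Ss$ satisfy $F(\Ss^{(n)})\neq 0$; the left-hand side is therefore genuinely a finite sum with the same nonzero summands as $\sum_{\Tt}F(\Tt)$, which is exactly the identity claimed. The only subtlety worth flagging is the legitimacy of the left-hand sum when $\Strat_\sigma(v)$ is uncountable, and this is resolved precisely by the finiteness of its support.
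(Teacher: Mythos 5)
Your proof is correct and follows essentially the same route as the paper: the key point in both is that a truncated strategy $\Tt$ with $F(\Tt)\neq 0$ admits no unfinished play, hence has a unique preimage under $\Ss\mapsto\Ss^{(n)}$, so the non-zero summands on both sides are in one-to-one correspondence. Your fibre-by-fibre regrouping and the remark on finite support merely make explicit what the paper's argument leaves implicit.
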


\lueg  If we have two strategies $\Ss_1 \neq \Ss_2$ in $\Strat_\sigma(v)$ with $\Tt=\Ss_1^{(n)}=\Ss_2^{(n)}$, then
$\Tt$ must contain an unfinished play (otherwise $\Tt=\Ss_1=\Ss_2$), which implies that $F(\Tt)=0$.
Thus, although the strategy spaces $\Strat_\sigma(v)$ are in general infinite, whereas $\Strat^{(n)}_\sigma(v)$
is finite for each fixed $n$, those strategies that provide non-zero values to the sums are in one-to-one correspondence,
and the two sums have the same value.  \hx

Putting these observations together, we obtain the desired generalization of Theorem~\ref{thm:prov-games}.

\begin{theorem}\label{thm:reachgames} For every game graph $\Gg$ with basic valuations $f_\sigma$ and $h_\sigma$ of the
terminal positions and moves in an $\omega$-continuous semiring $K$, we have that, for every position $v$
\[    f_\sigma(v):=(\lfp G_\sigma)(v)\ =\ \sum_{\Ss\in\Strat_\sigma(v)} F(\Ss).\]
In the cases where $h(e)=1$ for all $e$, or where 
$K$ is multiplicatively idempotent, we further have that
\[  f_\sigma(v)=\sum_{\Ss\in\Strat_\sigma(v)} \prod_ {x\in\Plays(\Ss)}  f_\sigma(x) .\]
\etheo

\lueg  By the lemmata above, we have that, for every $n<\omega$,
\[   G^n(v)=f^n_\sigma(v)= \sum_{\Tt\in\Strat_\sigma^{(n)}(v)} F(\Tt) = \sum_{\Ss\in\Strat_\sigma(v)}   F(\Ss^{(n)}).\]
Since, for every strategy $\Ss\in\Strat_\sigma(v)$ we have that $F(\Ss)=F(\Ss^n)$ for sufficiently large $n$,
the result follows by taking suprema.
\hx

For the case of game valuations $f_\sigma:V\ra\N[\![T]\!]$, given by the basic
valuations $f_\sigma(t)=t$ for terminal positions $t\in T$ and $h_\sigma(vw)=1$
for all moves $(v,w)\in E$, we again get precise information about the number 
of strategies that a player has for a specific outcome.
Indeed, $f_\sigma(v)$ is a (possibly) infinite sum of monomials
 $m\cdot t_1^{j_1}\dots t_k^{j_k}$

\begin{corollary} Let $f_\sigma:V\ra\N[\![T]\!]$ be the valuation of Player~$\sigma$ for
the game $\Gg$ in $\N[\![T]\!]$.
For every monomial  $m\cdot t_1^{j_1}\dots t_k^{j_k}$  in $f_\sigma(v)$ (with $m\in\N$ and $j_i>0$)  
Player~$\sigma$ has precisely $m$ strategies $\Ss$ from $v$ with the property that the
set of possible outcomes for $\Ss$ is precisely $\{t_1,\dots,t_k\}$, and 
precisely $j_i$ plays that are consistent with $\Ss$ have the outcome $t_i$.
\end{corollary}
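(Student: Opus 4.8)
The plan is to read the statement off Theorem~\ref{thm:reachgames} by making the value $F(\Ss)$ of a single strategy explicit for the basic valuation at hand, and then collecting equal monomials. By Theorem~\ref{thm:reachgames}, in the $\omega$-continuous semiring $\N^\infty[\![T]\!]$ we have $f_\sigma(v)=\sum_{\Ss\in\Strat_\sigma(v)}F(\Ss)$, and, because $h_\sigma(vw)=1$ for every move, the second formula of that theorem gives $F(\Ss)=\prod_{x\in\Plays(\Ss)}f_\sigma(x)$. Since $f_\sigma(t)=t$ and every edge factor is $1$, the value of a play $x$ ending in a terminal $t$ is just $t$, so $F(\Ss)=\prod_{t\in T}t^{\#_\Ss(t)}$, where $\#_\Ss(t)$ records the number of occurrences of $t$ in $\Ss$.

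First I would interpret this monomial combinatorially. A strategy contributes a non-zero term exactly when $F(\Ss)\neq 0$, i.e.\ when $\Ss$ admits no infinite play; by K\"onig's Lemma such an $\Ss$ admits only finitely many plays, all finite, so each exponent $\#_\Ss(t)=|\rho_\Ss^{-1}(t)|$ is finite and $F(\Ss)$ is a genuine monomial with coefficient $1$. Each node of the strategy tree projecting to $t$ is the endpoint of a unique $\Ss$-play with outcome $t$, so $\#_\Ss(t)$ is precisely the number of plays consistent with $\Ss$ that end in $t$. Hence $F(\Ss)=t_1^{j_1}\cdots t_k^{j_k}$ holds exactly when the set of possible outcomes of $\Ss$ is $\{t_1,\dots,t_k\}$ and exactly $j_i$ of its plays end in $t_i$.

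Finally I would collect like terms. Partitioning $\Strat_\sigma(v)$ according to the value $F(\Ss)$ partitions the index set of the (possibly infinite) sum, and the $\omega$-continuity axioms license this regrouping, so the coefficient of a fixed monomial $t_1^{j_1}\cdots t_k^{j_k}$ in $f_\sigma(v)$ equals $\sum_{\Ss:\,F(\Ss)=t_1^{j_1}\cdots t_k^{j_k}}1$, namely the number of strategies from $v$ whose multiset of outcomes contains each $t_i$ with multiplicity $j_i$. When this coefficient is the finite value $m\in\N$ of the statement, there are therefore exactly $m$ such strategies, which is the claim. The main obstacle, and the only genuinely new point beyond the acyclic case, is precisely the legitimacy of regrouping an infinite sum: $\Strat_\sigma(v)$ may be infinite and a single monomial may arise from infinitely many strategies, but this is exactly what $\omega$-continuity of $\N^\infty[\![T]\!]$ guarantees, an infinite coefficient being recorded as $\infty$ and a finite one giving the asserted strategy count.
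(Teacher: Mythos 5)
Your proof is correct and follows essentially the same route the paper intends: the corollary is stated there as an immediate consequence of Theorem~\ref{thm:reachgames}, with $F(\Ss)=\prod_{t\in T}t^{\#_\Ss(t)}$ being exactly the monomial recording the outcome multiplicities of $\Ss$, and the coefficient of each monomial obtained by grouping strategies of equal value. Your explicit treatment of the two points the paper leaves tacit --- that non-zero contributions come only from strategies admitting finitely many (all finite) plays, by K\"onig's Lemma, and that regrouping the possibly infinite sum over $\Strat_\sigma(v)$ is licensed by the partition axiom of $\omega$-continuity in $\N^\infty[\![T]\!]$, with infinite strategy counts recorded by the coefficient $\infty$ --- fills in precisely what the paper's derivation presupposes.
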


Let $\Gg=(V,V_0,V_1,T,E)$ be a game with reachability objectives $T_0,T_1$ for the two players,
such that $T_0\cap T_1=\emptyset$. Let $W_0, W_1\subseteq V$ be the winning regions
for the two players, i.e., $W_\sigma$ is the set of those positions $v\in V$ such that
Player~$\sigma$ has a strategy from $v$ to force the play to $T_\sigma$.
Note that $V$ is the disjoint union of the $W_0$, $W_1$ and $U$, the set of those positions
from which none of the two players has a winning strategy. By Zermelo's Theorem both
players have strategies to guarantee that each play from $U$ will be at least a draw.

\begin{corollary}\label{reachability-valuation} Let $f_\sigma:T\ra K$ be a valuation of the terminal positions of 
$\Gg$ in an $\omega$-continuous semiring, with $f_\sigma(t)\neq 0$ if, and only if, $t\in T_\sigma$.
The least fixed point solution of the equation system $F_\sigma$ extends this to a
valuation $f_\sigma:V\ra K$, with $f_\sigma(v)\neq 0$ if, and only if, $v\in W_\sigma$.
\end{corollary}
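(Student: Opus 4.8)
The plan is to read the result off Theorem~\ref{thm:reachgames}, which expresses
$f_\sigma(v)=(\lfp G_\sigma)(v)=\sum_{\Ss\in\Strat_\sigma(v)}F(\Ss)$ as a (possibly infinite) sum of strategy values, and to combine it with the fact that $K$ is $+$-positive. The point of $+$-positivity, which also propagates to the infinite summation of an $\omega$-continuous semiring, is the equivalence
\[ \sum_{\Ss\in\Strat_\sigma(v)}F(\Ss)\neq 0 \quad\Iff\quad F(\Ss)\neq 0 \text{ for some }\Ss\in\Strat_\sigma(v). \]
If every summand is $0$ the supremum of the partial sums is $0$; conversely, each summand lies below the sum in the natural order, so $+$-positivity forbids a nonzero summand in a sum that is $0$. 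It therefore suffices to analyse when a single strategy can have nonzero value and to match this with membership in $W_\sigma$.

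First I would treat the direction $v\notin W_\sigma\Rightarrow f_\sigma(v)=0$, which works for every $\omega$-continuous $+$-positive semiring. If $v\notin W_\sigma$, then no strategy of Player~$\sigma$ forces the play into $T_\sigma$, so every $\Ss\in\Strat_\sigma(v)$ admits a consistent play that never reaches $T_\sigma$. Such a play is either infinite---whence $F(\Ss)=0$ by our convention for strategies admitting infinite plays---or finite and ending in a terminal $t\in T\setminus T_\sigma$. In the latter case $t$ is an outcome of $\Ss$, so $\#_\Ss(t)\geq 1$, and since $f_\sigma(t)=0$ by hypothesis the factor $f_\sigma(t)^{\#_\Ss(t)}$ annihilates the product defining $F(\Ss)$. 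Hence $F(\Ss)=0$ for every $\Ss$, and $f_\sigma(v)$ vanishes.

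For the converse, $v\in W_\sigma\Rightarrow f_\sigma(v)\neq 0$, I would fix a winning strategy $\Ss_0\in\Strat_\sigma(v)$. Being winning for the reachability objective $T_\sigma$, $\Ss_0$ admits only finite plays, all ending in $T_\sigma$; as $\Gg$ is finitely branching, K\"onig's Lemma makes $\Ss_0$ a finite tree, so $F(\Ss_0)=\prod_{e}h_\sigma(e)^{\#_{\Ss_0}(e)}\cdot\prod_{t\in T}f_\sigma(t)^{\#_{\Ss_0}(t)}$ is a finite product whose factors are all nonzero: $h_\sigma(e)\in K\setminus\{0\}$ by definition of the move valuation, and each terminal $t$ actually occurring in $\Ss_0$ is an outcome in $T_\sigma$, whence $f_\sigma(t)\neq 0$. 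Then $f_\sigma(v)\neq 0$ by the $+$-positivity equivalence above.

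The one real obstacle is guaranteeing $F(\Ss_0)\neq 0$, i.e. that a finite product of nonzero elements is nonzero. This is exactly the requirement that $K$ have no divisors of $0$, i.e. that $K$ be \emph{positive}, not merely $+$-positive, and positivity is genuinely needed: in a $+$-positive semiring with zero divisors it can happen that every winning strategy out of a position $v\in W_\sigma$ has two outcomes $t_1,t_2\in T_\sigma$ with $f_\sigma(t_1)f_\sigma(t_2)=0$ (take complementary tokens $p,\nnp$ in a dual-indeterminate semiring, reached from an opponent-controlled $v$), forcing $f_\sigma(v)=0$ although $v\in W_\sigma$. I would therefore read the corollary as concerning \emph{positive} $\omega$-continuous semirings, matching the analogous caveat already made for strongly separating valuations, after which the argument is complete. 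Equivalently, and perhaps more cleanly, for positive $K$ one can push everything through the Boolean homomorphism $h\colon K\to\B$ with $h^{-1}(0)=\{0\}$, which is $\omega$-continuous and hence commutes with $\lfp$, reducing the claim to the classical Boolean computation of the winning region $W_\sigma$.
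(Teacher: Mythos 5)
Your argument follows exactly the route by which the paper obtains this statement: it is given as a corollary of Theorem~\ref{thm:reachgames} with no separate proof, the implicit reasoning being that $f_\sigma(v)=(\lfp G_\sigma)(v)=\sum_{\Ss\in\Strat_\sigma(v)}F(\Ss)$, that $+$-positivity (which, as you note, lifts to the infinitary sums of an $\omega$-continuous semiring, since every summand sits below the supremum in the natural order) reduces non-vanishing of the sum to non-vanishing of a single $F(\Ss)$, and that nonzero-valued strategies correspond to winning strategies. Your treatment of the direction $v\notin W_\sigma\Rightarrow f_\sigma(v)=0$ is correct and needs nothing beyond the paper's standing assumptions: every non-winning strategy either admits an infinite play (value $0$ by convention) or admits a finite play ending in a terminal of value $0$, which annihilates the product.

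Your caveat about the converse direction is not pedantry but a genuine correction to the statement. As printed, the corollary quantifies over all $\omega$-continuous semirings in the paper's class, and that class contains semirings with divisors of zero --- indeed the dual-indeterminate power series semirings $\N^\infty[\![X,\nnX]\!]$ introduced in this very paper. Your counterexample is valid: if $v\in V_{1-\sigma}$ has exactly two moves, to terminals $t_1,t_2\in T_\sigma$ with $f_\sigma(t_1)=p$ and $f_\sigma(t_2)=\nnp$, then $v\in W_\sigma$ since every play from $v$ immediately reaches $T_\sigma$, yet the unique strategy of Player~$\sigma$ from $v$ has value $p\cdot\nnp=0$, so $f_\sigma(v)=0$. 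Hence the ``only if'' half of the corollary really does require $K$ to be \emph{positive} (no zero divisors), not merely $+$-positive and root-integral, and your Boolean-homomorphism formulation is then the cleanest proof: $h\colon K\to\B$ with $h^{-1}(0)=\{0\}$ is an $\omega$-continuous homomorphism, hence commutes with the Kleene iteration defining $\lfp G_\sigma$, reducing the claim to the classical attractor computation of $W_\sigma$. It is worth adding why the paper's subsequent use of the corollary (provenance for $\posLFP$ via $K$-interpretations into $\N^\infty[\![X,\nnX]\!]$) survives despite the gap: there the interpretation is model-defining, so a winning strategy for Verifier reaches only literals true in $\AA_\pi$, whose annotations never contain a complementary pair of tokens; the product over such a strategy is therefore a nonzero monomial, and the forward direction holds for that restricted class of valuations even though the semiring has zero divisors. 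So the corollary should either assume positivity of $K$, as you propose, or restrict the valuations so that terminals co-reachable under a single winning strategy have values with nonzero product.
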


Notice that weakly contradictory valuations $f_0$ an $f_1$ on the terminal positions
extend to weakly contradictory valuations on all positions. 
However, even valuations into $\omega$-continuous semirings that
are strongly contradictory on the terminal positions, are in general only weakly contradictory
on the set of all positions, unless $W_0\cup W_1=V$, since $f_0(U)=f_1(U)=0$.
\ex\label{ex:reachability-game}
We illustrate our findings by the following very simple example of a game where Player~0 moves from $v$,
Player~1 moves from $w$, and $s$ and $t$ are terminal nodes.

\begin{center}
\begin{tikzpicture}[baseline=-4pt,scale=.8]
        \tikzstyle{every node}=[shape=circle, draw=black]
        \node[shape = diamond] (0) at (0,0) {$s$};
        \node (1) at (2,0) {$v$};
        \node[shape = rectangle] (2) at (4,0) {$w$};
        \node[shape = diamond] (3) at (6,0) {$t$};
        \draw[->,thick,bend left=45] (1) edge (2) (2) edge (1);
        \draw[->,thick] (1) edge (0) (2) edge (3);
      \end{tikzpicture}
\end{center}

The corresponding equation system for Player~0 has the equations $X_v=s+X_w$ and $X_w=t\cdot X_v$.
In $\N^\infty[\![s,t]\!]$ the least fixed-point solution is $f(v)=s\cdot(1+t+t^2\dots)$ and
$f(w)=s\cdot(t+t^2+\dots)$. If we evaluate it for the reachability objectives $\{s\}$ and $\{t\}$, respectively,
we obtain $f(v)[0,t]=f(w)[0,t]=0$ which illustrates that neither from $v$ nor from $w$,
Player~0 has a strategy to reach $t$. On the other side, $f(v)[s,0]=s$ and $f(w)[s,0]=0$
which is consistent with the fact that Player~0 has a strategy to reach $s$ from $v$ but
not from $w$.

But the formal power series $f(v)$ and $f(w)$ reveal more information than that.
For instance, the fact that $f(v)$ contains, for every $n$, the monomial $s\cdot t^n$ 
implies that Player~0 has precisely one strategy $S$ from $v$ that admits
precisely $n+1$ consistent plays, one of which has outcome $s$ and the other $n$
have outcome $t$; this is the strategy where Player~0 moves from $v$ to $w$ the first
$n$ times, and then to $s$.
Notice that Player~0 also has one further strategy, namely the (positional) strategy to move
always to $w$. However, this strategy does not guarantee that the play terminates and therefore
has value 0, so it is not visible in the provenance values $f(v)$ and $f(w)$.
\eex

\section{Provenance analysis for positive LFP}

Least fixed-point logic, denoted LFP,   
extends first order logic by 
least and greatest fixed points of definable 
monotone operators on relations:
If  $\psi(R,\bar x)$ is a formula 
of vocabulary $\tau\cup\{R\}$, in which the  relational variable $R$ occurs only positively, and 
if $\bar x$ is a tuple of variables 
such that the length of $\bar x$ matches the arity of $R$, then 
$[\lfp R\bar x\st \psi](\bar x)$ and $[\gfp R\bar x \st\psi](\bar x)$
are also formulae (of vocabulary $\tau$).
The semantics of these formulae is that $\bar x$ is contained in the least
(respectively the greatest) fixed point of the update operator
$F_\psi:R\mapsto\{\bar a:  \psi(R,\bar a)\}$. Due to the positivity of $R$ in $\psi$,
any such operator $F_\psi$ is monotone and therefore has, by the Knaster-Tarski-Theorem, a least 
fixed point $\lfp(F_\psi)$ and a greatest fixed point $\gfp(F_\psi)$. 
See e.g. \cite{Graedel+07} for background on $\LFP$.
 
Note that in formulae
$[\lfp R\bar x \st \psi](\bar x)$ one may allow $\psi$ to
have other free variables besides $\bar x$; these are called parameters
of the fixed-point formula. However, at the expense of increasing the
arity of the fixed-point predicates and the number of variables one can always eliminate parameters.
For the construction of model-checking games and also for provenance analysis it is convenient to assume 
that formulae are parameter-free.
The duality between least and greatest fixed point implies
that for any  $\psi$,
\[ [\gfp R\bar x\st \psi](\bar x)\equiv \neg [\lfp R\bar x\st \neg\psi[R/\neg R]](\bar x).\]
Using this duality together with de Morgan's laws, every LFP-formula can
be brought into \emph{negation normal form}, where negation applies to
atoms only.

\medskip\noindent{\bf The fragment of positive least fixed points. }
We denote by $\posLFP$ the fragment of LFP consisting of formulae in negation normal form
such that all its fixed-point operators are least fixed-points.
It is known that, on finite structures (but not in general), $\posLFP$ has the same
expressive power as full LFP, and thus captures all polynomial-time computable
properties of ordered finite structures \cite{Graedel+07} . 

An advantage of dealing with $\posLFP$, rather than full LFP, is that it admits much simpler model checking
games. Indeed the appropriate games for LFP are \emph{parity games},
whereas for $\posLFP$, reachability games are sufficient. This can be exploited 
to define provenance interpretations for fixed-point formulae, along
the lines described in the previous section.

Definition~\ref{FOgame} of model checking games $\Gg(A,\psi)$ 
for $\psi\in\FO(\tau)$ extends to 
formulae $\psi(\bar x)\in\posLFP(\tau)$ as follows:
For every subformula of $\psi$ of form
$\theta:=[\lfp R\bar x \st \phi(R,\bar x)](\bar x)$
we add moves from positions $\theta(\bar a)$ to $\phi(\bar a)$, and from
positions $R\bar a$  to $\phi(\bar a)$ for  every tuple $\bar a$. 
Since these moves are unique it makes no difference to which of the two players
we assign the positions $\theta(\bar a)$ and $R\bar a$.
The resulting game graphs $\Gg(A,\psi)$ may contain cycles,
but the set $T$ of terminal nodes is again a subset of $\Lit_A(\tau)$. 

A $K$-interpretation $\pi:\Lit_A(\tau)\ra K$ into
an $\omega$-continuous semiring thus provides a valuation of the 
terminal positions of the game graph $\Gg(A,\psi)$ for any
$\psi\in\posLFP(\tau)$. 
By Theorem~\ref{reachability-valuation} this extends to a valuation $f_0: V\ra K$ on
the set $V$ of all positions $\phi(\bar a)$ of $\Gg(A,\psi)$, including position $\psi$ itself.

\bdefn  For any instantiated subformula $\phi$ of a sentence $\psi\in\posLFP$, we define the provenance value $\pi\ext{\phi}$ by its game valuation:
$\pi\ext{\phi}:= f_0(\phi)$. \edefn

In particular, if $\pi$ is model-defining, then $f_0$ provides truth values for all
fully instantiated subformula $\phi$ of $\psi$ on the structure $\AA_\pi$
that $\pi$ describes. Indeed $\AA_\pi\models \phi$
if, and only if, $\pi\ext{\phi}\neq 0$, and in that case the value $\pi\ext{\phi}$ gives
us additional information, how and why $\phi$ holds in $\AA$, for instance by 
information on the winning strategies that Verifier has available for establishing the truth of $\phi$ in
$\AA_\pi$. 
However, contrary to the case of
first-order logic, in the case where $\AA_\pi\not\models \phi$, and hence
$\pi\ext{\phi}=0$, we do not get additional information on the reasons why
$\phi$ is false. The possibility to move to $\neg\phi$ (or more precisely, its negation normal form)
and to do the provenance analysis for that formula, does not exist here since $\neg\phi$ is
not a formula of posLFP.  In fact, the model checking-game for $\neg\phi$ is not
a reachability game, but a safety game. To deal with safety games and greatest fixed points
we shall have to impose additional restrictions on the underlying semirings.
We shall discuss this below.

\medskip
One can define provenance values for posLFP-sentences also directly
by a  fixed-point interpretation in $\omega$-commutative
semirings.
The goal is to extend, by induction over the syntax, a $K$-interpretation $\pi:\Lit_A(\tau)\ra K$ to
valuations $\pi\ext{\psi}\in K$ for all sentences $\psi\in\posLFP(\tau\cup A)$.
The rules for first-order operations are defined already, so we just have to consider
sentences of form $\psi(\bar a)=[\lfp R\bar x. \phi(R,\bar x)](\bar a)$,
with $\phi \in \posLFP(\tau\cup\{R\})$.
If $R$ has arity $m$, then its $K$-interpretations of $A$ are functions $g:A^m\ra K$.
These functions are ordered, by $g\leq g'$ if, and only if, $g(\bar a)\leq g'(\bar a)$
for all $\bar a\in A^m$.
Given a $K$-interpretation $\pi:\Lit_A(\tau)\ra K$, we denote by $\pi[R\mapsto g]$
the $K$-interpretation of $\Lit_A(\tau)\cup \Atoms_A(\{R\})$  obtained from $\pi$ by
adding values $g(\bar c)$ for the atoms $R\bar c$. (Notice that $R$ appears only positively in $\phi$,
so negated atoms are not needed).

The formula $\phi(R,\bar x)$ now defines, together with $\pi$, a monotone update
operator $F_\pi^\phi$ on functions $g: A^m\ra K$. More precisely, it maps
$g$ to 
\[   F_\pi^\phi(g): \bar a\mapsto   \pi[R\mapsto g]\ext{\phi(R,\bar a)}. \]
By Kleene's Fixed-Point Theorem, the operator $F_\pi^\phi$ has a least fixed point
$\lfp(F_\pi^\phi)$ which coincides with the limit of the sequence $(g^n)_{n<\omega}$
with $g^0:=0$ and $g^{n+1}:= F^\phi_\pi(g^n)$, and which we may define as
the provenance value of $[\lfp R\bar x. \phi(R,\bar x)](\bar a)$.
The two definitions coincide.

\begin{proposition} For every formula $ [\lfp R\bar x. \phi(R,\bar x)]\in\posLFP$
and every $K$-interpretation $\pi:\Lit_A(\tau)\ra K$ into an $\omega$-continuous semiring, 
$\pi\ext{\, [\lfp R\bar x. \phi(R,\bar x)](\bar a)} = \lfp(F_\pi^\phi)(\bar a).$
\end{proposition}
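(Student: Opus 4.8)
The plan is to prove, by induction on the structure of $\posLFP$-formulae and uniformly over all $K$-interpretations, that the value assigned by the syntactic rules (the first-order clauses together with $\lfp(F_\pi^\phi)$ for fixed points) agrees with the game valuation $f_0$ of the corresponding position in $\Gg(A,\psi)$. For the first-order operations this is settled exactly as in the first-order case (the theorem relating $f_0$ to $\pi\ext{\cdot}$), so the only case requiring work is the outer fixed-point operator $\psi=[\lfp R\bar x\st\phi(R,\bar x)]$. Here the induction hypothesis, applied to the simpler formula $\phi$ over the extended vocabulary $\tau\cup\{R\}$, states that for \emph{every} $K$-interpretation $g$ of $R$ the value $\pi[R\mapsto g]\ext{\phi(\bar a)}=F_\pi^\phi(g)(\bar a)$ equals the game valuation of the position $\phi(\bar a)$ in the ``cut'' game $\Gg(A,\phi)$, in which the positions $R\bar c$ are treated as terminal and assigned the value $g(\bar c)$.

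First I would record the local shape of $\Gg(A,\psi)$: from $\psi(\bar a)$ the unique move leads to $\phi(\bar a)$, and from $R\bar a$ the unique move also leads to $\phi(\bar a)$. Since $h\equiv 1$ and a single successor makes the defining sum and product coincide, so that the ownership of the position is immaterial, we obtain $f_0(\psi(\bar a))=f_0(R\bar a)=f_0(\phi(\bar a))$. Writing $r(\bar a):=f_0(R\bar a)$, the game $\Gg(A,\psi)$ arises from the cut game by reinstating the back-edges $R\bar c\to\phi(\bar c)$; thus on the subformula positions of $\phi$ the full system and the cut system carry exactly the same equations, the only difference being that the cut system fixes $R\bar c$ to a parameter whereas the full system links it to $\phi(\bar c)$.

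The two sides are then compared through the least-fixed-point characterisation of the game valuation (Theorem~\ref{thm:reachgames}). For $\lfp(F_\pi^\phi)\le r$, note that $f_0$, restricted to the $\phi$-positions and reading off $r(\bar c)=f_0(R\bar c)$ as the parameter values, is a fixed point of the cut system; hence it dominates the least one, which by the induction hypothesis is $\pi[R\mapsto r]\ext{\cdot}$. This yields $F_\pi^\phi(r)(\bar a)=\pi[R\mapsto r]\ext{\phi(\bar a)}\le f_0(\phi(\bar a))=r(\bar a)$, so $r$ is a pre-fixed point of $F_\pi^\phi$ and therefore $\lfp(F_\pi^\phi)\le r$. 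For the reverse inequality I extend $s:=\lfp(F_\pi^\phi)$ to all positions by setting $\hat s(R\bar c)=\hat s(\psi(\bar c))=s(\bar c)$ and, on the $\phi$-positions, letting $\hat s$ be the cut-game valuation for the parameter $s$, which by the induction hypothesis equals $\pi[R\mapsto s]\ext{\cdot}$. Using that $s$ is a fixed point of $F_\pi^\phi$, one checks that $\hat s$ satisfies every equation of the full system, the only nonroutine case being $\hat s(R\bar c)=s(\bar c)=F_\pi^\phi(s)(\bar c)=\hat s(\phi(\bar c))$. Hence $f_0\le\hat s$ and in particular $r(\bar a)\le s(\bar a)$, which closes the argument.

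The main obstacle is conceptual rather than computational: the statement is an instance of the Beki\'c-style interchange of nested and simultaneous least fixed points, and the delicate point is to justify that the induction hypothesis genuinely applies to the cut game even though $\phi$ may contain further nested least fixed points that keep the cut game cyclic. Turning the restriction and the extension above into actual fixed points of the respective equation systems requires threading the $R$-values consistently through all nesting levels, and it is exactly the uniformity of the induction hypothesis over arbitrary interpretations $g$ of $R$ that makes this bookkeeping go through.
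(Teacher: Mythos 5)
Your proof is correct, and it takes a genuinely different route from the one the paper gestures at. The paper gives no argument of its own: it declares the proposition ``a rather straightforward adaptation of the correctness proof for model checking games for LFP'' and points to the textbook treatment. That classical proof works at the level of approximants: one relates the Kleene stages $g^n$ of $F_\pi^\phi$ to truncations of the game (in the spirit of the proof of Theorem~\ref{thm:reachgames}), and, in the Boolean case, extracts strategies from stages and stages from strategies. You avoid approximants entirely in the fixed-point case and argue purely order-theoretically, via the universal property of least (pre-)fixed points: the game values $r$ of the $R$-positions form a pre-fixed point of $F_\pi^\phi$, giving $\lfp(F_\pi^\phi)\le r$; conversely, $\lfp(F_\pi^\phi)$, propagated through the cut game, yields a fixed point of the full game system, giving $r\le\lfp(F_\pi^\phi)$. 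This is exactly the Beki\'c-style interchange of nested and simultaneous least fixed points, as you note yourself. What your approach buys: a self-contained proof whose only ingredients are the definition of game valuations as least solutions of their equation systems, monotonicity of $F_\pi^\phi$ (asserted in the paper; you use it silently when concluding that a pre-fixed point dominates the least fixed point), and the uniformity of the induction hypothesis over all interpretations $g$ of $R$, which you rightly identify as the crux. What the approximant route buys: it simultaneously recovers the strategy-level reading of the fixed-point value (the sum-over-strategies form of Theorem~\ref{thm:reachgames}), which your argument deliberately bypasses. One detail to tighten in your write-up: in the first-order cases of your induction the subgames are generally still cyclic (because of nested fixed points), so ``settled exactly as in the first-order case'' should be replaced by the routine remark that the game system is triangular --- the equations for the positions of the immediate subformulas never mention the top variable --- so the least solution of the whole system restricts to, and is recombined from, the least solutions of the subsystems.
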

   
The proof is a rather straightforward adaptation of the correctness proof for model checking games
for LFP, see e.g. \cite[Chapter 3.3]{Graedel+07}.

\section{Beyond reachability: safety games and greatest fixed points} 

While the restriction of LFP to its positive fragment comes with no loss of expressive
power (on finite structures) and while $\posLFP$ is sufficiently powerful to capture a number of
interesting and relevant other fixed-point formalisms in computer science,
it is nevertheless not really satisfactory. One reason is that the transformation
from a fixed-point formula with non-atomic negation into one in $\posLFP$
is (contrary to transformations into negation normal form) 
not a simple syntactic translation. It goes through the Stage Comparison Theorem and
can make a formula much longer and more complicated. Further, such transformations
are  not available for important fixed-point formalism such as the modal $\mu$-calculus,
stratified Datalog, transitive closure logics, and even simple temporal languages
such as CTL. On the game-theoretic side, reachability games are just the simplest kind
of games on graphs, and in many applications players have different and more ambitious goals  
such as safety, B\"uchi, parity or Muller, objectives.  It is thus an important and interesting challenge to lay the foundations
of a provenance analysis for full LFP and infinite games with more general objectives, and to apply this approach
to the numerous other fixed-point formalisms, in particular in databases and verification.  

We defer a detailed treatment of this to forthcoming work. Here we discuss some of the mathematical concepts and
challenges that arise in this project, and apply them to the provenance of \emph{safety games}.
Recall that the computation of winning positions for safety objectives is a simple, but also
in some sense universal, application of greatest fixed points.   

The first observation is that we need to impose additional requirements on the semirings that we consider.
While $\omega$-continuous semirings are appropriate for a provenance analysis of
least fixed points and reachability objectives, they are not always adequate for greatest
fixed points. The property of $\omega$-continuity is not sufficient to guarantee the
existence of greatest fixed points, and in cases where they exist they  do not necessarily provide the
information that we are interested in. 

\vspace*{1cm}

\ex\label{ex:safety-game}
We consider the game graph
\begin{center}
\begin{tikzpicture}[baseline=-4pt,scale=.8]
        \tikzstyle{every node}=[shape=circle, draw=black]
        \node[shape = diamond] (0) at (0,0) {$s$};
        \node[shape = rectangle] (1) at (2,0) {$w$};
        \node (2) at (4,0) {$v$};
        \node[shape = rectangle] (3) at (6,0) {$z$};
        \node[shape = diamond] (4) at (8,0) {$t$};
        \draw[->,thick,bend left=45] (1) edge (2) (2) edge (1) (2) edge (3) (3) edge (2);
        \draw[->,thick] (1) edge (0) (3) edge (4);
      \end{tikzpicture}
\end{center}
with associated equation system for Player~0 consisting of $X_v=X_w+X_z$, $X_w=f(s)\cdot X_v$, and $X_z=f(t)\cdot X_v$.
The least fixed-point solution (in whatever semiring) has values $f(v)=f(w)=f(z)=0$ which reflects the fact that Player~0
has no strategy to guarantee a finite play.
It is not difficult to see that in $\N^\infty[\![s,t]\!]$ this in fact the unique fixed point, hence in particular the greatest one,
which however gives us no information about safety strategies. In $\N^\infty$ instead, under a valuation of the terminal
nodes with $f(s)=a\neq 0$ and $f(t)=0$, we get the greatest fixed point $f(v)=f(w)=\infty$ and $f(z)=0$. In particular,
greatest fixed-points do not specialise correctly from $\N^\infty[\![s,t]\!]$ to $\N^\infty$.

We shall see below that get interesting information on safety strategies by provenance values in the absorptive semiring
$\S^\infty[s,t]$.
\eex

To make sure that also greatest fixed points of polynomial equation systems exist, we shall require that our
semirings are not just $\omega$-continuous, but also also \emph{$\omega$-co-continuous},
i.e. that every descending $\omega$-chain $(a_i)_{i\in\omega}$, with $a_{i+1}\leq a_{i}$ for all $i\in\omega$,
has an infimum $\inf_{i\in\omega} a_i$  in $K$, which is compatible with the
semiring operations in the sense that, for every $c\in K$,
\[  c + \inf_{i\in\omega} a_i = \inf_{i\in\omega} (c +a_i) \text { and } c\cdot \inf_{i\in\omega} a_i=\inf_{i\in\omega} (c\cdot a_i).\]
We call such semirings \emph{fully $\omega$-continuous}.
Our most important example of such a semiring is $\Sinf[X]$, the semiring of generalized
absorptive polynomials, that we are going to discuss next.

\section{Absorptive semirings and generalized absorptive polynomials}

Recall that a semiring $K$ is \emph{absorptive} if $a+ab=a$ for all $a,b\in K$
which is equivalent to $1+a=1$ for all $a\in K$.
Examples include the Viterbi semiring, 
the tropical semiring, min-max semirings,
further the semiring $\S[X]$ of absorptive polynomials over $X$.
Absorptive semirings are +-idempotent and naturally ordered, 1 is the top element,
and multiplication decreases elements: $ab\leq b$.
In particular, the powers of an element form a descending
$\omega$-chain $1 \geq a\geq a^2 \geq \cdots$.
If this chain has an infimum then we denote it by $a^\infty$.

In the semiring $\S[X]$, the infima of descending $\omega$-chains $(x^n)_{n<\omega}$ are always 0 and thus not 
very informative. 
We therefore complete $\S[X]$ to the semiring $\Sinf[X]$ by admitting exponents in $\N^\infty$.

\bdefn Let $X$ be a \emph{finite} set of provenance tokens. 
A \emph{monomial} over $X$ with exponents from
$\Ninf$ is a function $m:X\rightarrow\Ninf$. 
Informally, we write $m$ as $x_1^{m(x_1)}\cdots x_n^{m(x_n)}$. 
Monomial multiplication adds the exponents. 
Observe also that $x^\infty\cdot x^n =x^\infty$.
For any two monomials, $m_1,m_2$
we say that that $m_2$ \emph{absorbs} $m_1$ if $m_2$ has smaller exponents
than $m_1$. Formally, $m_1\preceq m_2$ if, and only if, $m_1(x)\geq m_2(x)$ for all $x\in X$.
Since monomials are functions, this is the pointwise
partial order given by the order on $\Ninf$.
\edefn

Because $\Ninf$ is a lattice (with top and bottom) the monomials
also inherit a lattice structure.
The set of all monomials is, of course, infinite. However,
it has some crucial finiteness properties.

\begin{proposition}
Every ascending chain and every antichain of monomials is finite.
\end{proposition}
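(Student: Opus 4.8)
The plan is to recognise the order $\preceq$ on monomials as the reverse of the coordinatewise order on exponent vectors, and then to reduce both claims to the single statement that $(\Ninf)^X$, with $X$ finite, is a well-quasi-order under the coordinatewise (product) partial order. Concretely, writing a monomial as its exponent function $m\colon X\to\Ninf$, the relation $m_1\preceq m_2$ means $m_1(x)\ge m_2(x)$ for every $x$, so a $\preceq$-ascending chain of monomials is precisely a coordinatewise \emph{descending} chain of exponent vectors, while a $\preceq$-antichain is exactly an antichain for the coordinatewise order (antichains do not see the direction of the order). Hence it suffices to show that the coordinatewise order on $(\Ninf)^X$ has no infinite strictly descending chain and no infinite antichain.

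For the chain part I would first record that $\Ninf=\N\cup\{\infty\}$ is well-founded: a strictly descending sequence can take the value $\infty$ only as a first term, after which it lives in $\N$, which is well-ordered, so every descending sequence in $\Ninf$ stabilises. Given a coordinatewise descending chain in $(\Ninf)^X$, each of the finitely many coordinate sequences is descending in $\Ninf$ and therefore eventually constant; taking the maximum of these finitely many stabilisation points shows the whole chain is eventually constant, so no infinite \emph{strictly} descending chain exists. This already yields finiteness of the ascending chains of monomials, and it needs nothing beyond well-foundedness of $\Ninf$ together with the finiteness of $X$.

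The substantive part is the absence of infinite antichains, a Dickson-type statement with $\Ninf$ in place of $\N$. Here I would prove the stronger well-quasi-order property directly: from any infinite sequence $(m_i)$ in $(\Ninf)^X$ I extract, by repeated thinning, a subsequence that is coordinatewise non-decreasing. The key sublemma is that every infinite sequence in $\Ninf$ has an infinite non-decreasing subsequence; this follows from the standard peak argument, namely if infinitely many indices are peaks one obtains a strictly decreasing subsequence, which is impossible by well-foundedness of $\Ninf$, so only finitely many peaks occur and beyond the last one a non-decreasing subsequence can be built. Applying this once per coordinate, which is legitimate because $X$ is finite, produces an infinite subsequence $m_{i_1},m_{i_2},\dots$ along which every coordinate is non-decreasing, i.e. $m_{i_1}\le m_{i_2}\le\cdots$ in the coordinatewise order. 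Any two of its members are then comparable, so an infinite antichain cannot exist, since it could contain at most one term of such a subsequence. The main obstacle is precisely this extraction step, and checking that the presence of the top element $\infty$ breaks neither the monotone-subsequence sublemma nor well-foundedness; both survive because $\infty$ sits above a well-ordered $\N$ and can be passed at most once on the way down.
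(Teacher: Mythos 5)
Your proof is correct and takes essentially the same route as the paper's: identify the absorption order as the reverse of the coordinatewise order on $(\Ninf)^{|X|}$ and conclude from the well-quasi-order property of this finite product, which rules out both infinite descending chains and infinite antichains. The only difference is that the paper cites the wqo fact without proof, whereas you establish it directly via well-foundedness of $\Ninf$ and the coordinate-by-coordinate extraction of a non-decreasing subsequence; that argument is sound.
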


\begin{proof} Clearly $(\Ninf,\leq)$ is a well-order. For any finite set $X$,
the set of monomials $m:X\rightarrow \Ninf$ with the \emph{reverse order}
of the absorption order is isomorphic to $(\Ninf)^k$ with $k=|X|$ and
with the component-wise order inherited from $(\Ninf,\leq)$. This is a well-quasi-order
and therefore has no infinite descending chains and no infinite antichains.
This implies that in the set of monomials over $X$ with the absorption order,
all ascending chains and all antichains are finite.  
\end{proof}

\bdefn
We define $\Sinf[X]$ as the set of antichains of monomials
with indeterminates from $X$ and exponents in $\Ninf$.
Writing an antichain as a (formal) sum of its monomials we identify
it with a polynomial with coefficients 0 or 1, and call these
\emph{generalized absorptive polynomials}. We define polynomial addition and multiplication as usual, 
except that for coefficients 1+1=1,  and that we keep only the maximal monomials in the result.
The empty antichain corresponds to the 0 polynomial. The 1 polynomial
consists of just the monomial in which every indeterminate has exponent 0. 
\edefn

\begin{proposition}
$(\Sinf[X],+,\cdot,0,1)$ is an absorptive commutative semiring.
Further it is a complete lattice wrt. to the natural order, which is 
fully $\omega$-continuous and moreover
completely distributive.
\end{proposition}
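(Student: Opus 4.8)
The plan is to reduce everything to the well-understood lattice of up-sets of the monomial poset. Write $M=(\Ninf)^{X}$ for the set of monomials, ordered by the \emph{reverse} of the absorption order $\preceq$, i.e. the pointwise (product) order $\le$ on exponents from the previous proposition; by that proposition $(M,\le)$ is a well-quasi-order. To a generalized absorptive polynomial $p$ (a finite $\preceq$-antichain) I associate the set $\Phi(p):=\{m : m\preceq m'\text{ for some }m'\in p\}$ of all monomials absorbed by $p$. Each $\Phi(p)$ is upward closed in $(M,\le)$, and since $\le$ is a well-quasi-order every up-set of $M$ has finitely many $\le$-minimal (equivalently $\preceq$-maximal) elements and is generated by them; hence $p\mapsto\Phi(p)$ is a bijection between $\Sinf[X]$ and the up-sets of $(M,\le)$, with inverse $U\mapsto\mxls U$. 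First I would record the translation of the operations under $\Phi$: one checks directly that $\Phi(p+q)=\Phi(p)\cup\Phi(q)$, that $\Phi(p\cdot q)=\{m_1m_2 : m_1\in\Phi(p),\,m_2\in\Phi(q)\}$ is again an up-set (using that $n\ge m_1m_2$ admits a factorization $n=m_1\cdot(m_2 r)$ in $M$), that $\Phi(0)=\emptyset$ and $\Phi(1)=M$, and that the natural order corresponds to inclusion.

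Given this dictionary, the semiring axioms are routine: $+$ is union (so commutative, associative, idempotent, with neutral element $\emptyset$), $\cdot$ is the elementwise product of up-sets (commutative and associative because $M$ is a commutative monoid, with neutral element $M=\Phi(1)$), and distributivity as well as $0\cdot a=0$ and $0\ne1$ are immediate. Absorptivity is cleanest in the form $1+a=1$, which under $\Phi$ reads $M\cup\Phi(a)=M$. The natural order is inclusion of up-sets, hence a partial order, so $\Sinf[X]$ is naturally ordered; and since the up-sets of $(M,\le)$ are closed under arbitrary unions and intersections (each again being finitely generated by the well-quasi-order property), $\Sinf[X]$ is a complete lattice in which joins are unions and meets are intersections, with top $\Phi(1)=M$ and bottom $\Phi(0)=\emptyset$.

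For $\omega$-continuity I would use that the elementwise product distributes over arbitrary unions, so $c\cdot\sum_i a_i=\sum_i c\cdot a_i$ and the rearrangement law for $\sum$ hold (both sides are the same union), while suprema of $\omega$-chains are unions. For $\omega$-co-continuity, infima of descending chains are intersections; the additive law $c+\inf_i a_i=\inf_i(c+a_i)$ is the set identity $C\cup\bigcap_i U_i=\bigcap_i(C\cup U_i)$. The genuinely substantive point---and the step I expect to be the main obstacle---is the multiplicative co-continuity $c\cdot\inf_i a_i=\inf_i(c\cdot a_i)$, i.e. $C\cdot\bigcap_i U_i=\bigcap_i(C\cdot U_i)$ for a descending chain $U_0\supseteq U_1\supseteq\cdots$; this is false for intersections of arbitrary families and must exploit both finiteness and the chain structure. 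The inclusion $\subseteq$ is monotonicity. For $\supseteq$, fix $m\in\bigcap_i(C\cdot U_i)$. Because $(M,\le)$ is a well-quasi-order, $C$ is generated by finitely many monomials $c^1,\dots,c^r$, and for each $c^j\le m$ there is a largest divisor $d^j$ of $m$ whose cofactor dominates $c^j$; one shows that $m\in C\cdot U_i$ holds if, and only if, $d^j\in U_i$ for some $j$ with $c^j\le m$. Since there are only finitely many such $d^j$ but infinitely many indices $i$, by pigeonhole some fixed $d^{j_0}$ lies in $U_i$ for infinitely many $i$, and as the chain is descending it then lies in every $U_i$, hence in $\bigcap_i U_i$; its cofactor lies in $C$, which exhibits $m\in C\cdot\bigcap_i U_i$.

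Finally, complete distributivity I would obtain for free from the set representation: the up-sets of $(M,\le)$ form a complete sublattice of the powerset lattice $2^{M}$, that is, they are closed under arbitrary unions and intersections, and these agree with the joins and meets of $2^{M}$. Since $2^{M}$ is completely distributive and the complete-distributive identities involve only joins and meets, they are inherited by any complete sublattice; thus $\Sinf[X]\cong\mathrm{Up}(M,\le)$ is completely distributive, which also re-proves the ordinary distributivity used above.
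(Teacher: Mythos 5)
Your proof is correct, and it fills a genuine gap: the paper states this proposition without giving any proof, so there is no argument of the authors' to compare yours against line by line. Your route---the order isomorphism $p\mapsto\Phi(p)$ identifying $\Sinf[X]$ with the lattice of up-sets of the monomial poset $((\Ninf)^X,\le)$, under which $+$ becomes union, $\cdot$ becomes the elementwise product of up-sets, and the natural order becomes inclusion---is a clean way to get everything at once: the semiring axioms and absorptivity ($1+a=1$) are immediate from the dictionary; completeness follows because up-sets are closed under arbitrary unions and intersections, each again finitely generated by the well-quasi-order property; and complete distributivity is inherited from the powerset lattice, since the up-sets form a complete sublattice of it (arbitrary set-theoretic joins and meets of up-sets are up-sets, hence agree with the lattice operations). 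Crucially, you correctly isolated the only genuinely delicate claim, namely multiplicative $\omega$-co-continuity $c\cdot\inf_i a_i=\inf_i(c\cdot a_i)$, which indeed fails for intersections of arbitrary (non-chain) families, and your argument for it is sound: membership of a monomial $m$ in $C\cdot U_i$ is equivalent to membership in $U_i$ of one of the finitely many maximal cofactors $d^j$ of $m$ relative to the finitely many generators $c^j$ of $C$, and then pigeonhole together with the descending-chain property pins down a single $d^{j_0}$ lying in every $U_i$.

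Two points deserve an explicit sentence in a polished write-up. First, the bijection between finite antichains and up-sets uses not only that antichains are finite but also that the exponent order is well-founded, so that every element of an up-set dominates a minimal element; both facts are part of the well-quasi-order property you invoke, but only the antichain half is stated in the paper's preceding proposition. Second, in the two factorization steps (that $n\ge m_1m_2$ admits a factorization $n=m_1\cdot(m_2r)$, and the existence and maximality of the cofactor $d^j$ with $c^j\cdot d^j=m$) the presence of infinite exponents means that ``subtraction'' of exponents must be defined to be $\infty$ at every coordinate $x$ with $m(x)=\infty$; checking that this still yields an exact factorization, and that any divisor of $m$ whose cofactor dominates $c^j$ lies below $d^j$, is easy but not automatic, and it is exactly where a naive argument over $\N$-exponents would be incomplete.
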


As a consequence, we can compute not only least fixed point solutions
for systems of polynomial equations but also greatest fixed points.
In contrast to other semirings with such properties, such as for instance the
Viterbi semiring, $\Sinf[X]$ has one further crucial property.
It is \emph{chain-positive} which means that the infimum of 
every chain of non-zero elements is also non-zero.

As in other semirings of polynomials and power series we can also here 
take pairs of positive and negative indeterminates, with
a correspondence $X\iff \nnX$ and build
the quotient with respect to the congruence generated by 
the equation $x\cdot\nnx =0$.
We thus obtain a new semiring $\Sinf[X,\nnX]$ which provides  
a natural framework for a provenance analysis for
full LFP and other fixed point calculi. We shall develop this in
forthcoming work.

Here we use the semiring $\Sinf[T]$ to describe a provenance analysis for
safety games where $T$ is the set of terminal positions of the given game graph. 

\section{Absorption among strategies}\label{sect:strategy-absorption}

\bdefn
Let  $\Gg=(V,V_0,V_1,T,E)$ be a finite game graph,
and $v\in V$.
For two strategies $\Ss,\Ss'\in\Strat_\sigma(v)$, we say that 
$\Ss$ \emph{absorbs} $\Ss'$ (in symbols $\Ss\succeq_a \Ss'$) 
if
\begin{itemize}
\item for all $t\in T$, $\Ss$ admits at most as many plays
with outcome $t$ as $\Ss'$ does, and
\item if $\Ss$ admits an infinite play, then so does $\Ss'$.
\end{itemize}
We call $\Ss$ \emph{absorption-dominant} 
if it is maximal with respect to  $\succeq_a$. 
\edefn

Absorption-dominant strategies are interesting both for games in general
and for logic because they can win ``with minimal effort''.
As a simple example, consider a model checking game for a formula $\phi\lor(\phi\land\psi)$.
The Verifier can either establish $\phi$ or $\phi\land\psi$, but any strategy
that establishes the truth of $\phi\land\psi$ will have more plays and 
more outcomes than one that proves just $\phi$, and will thus be absorbed by it. 
The  absorption-dominant strategies for  $\phi\lor(\phi\land\psi)$
are thus precisely the  absorption-dominant strategies for  $\phi$.

Notice however that, despite this minimality, absorption dominant strategies
need not be positional, not even in acyclic games.

\ex\label{absdom-vs-pos}
Consider the game
\begin{center}
\begin{tikzpicture}[baseline=-4pt,scale=.8]
        \tikzstyle{every node}=[shape=circle, draw=black]
        \node[shape = rectangle] (0) at (0,0) {$u$};
        \node[shape = rectangle] (1) at (2,1) {$v$};
        \node[shape = rectangle] (2) at (2,-1) {$w$};
        \node (5) at (4,0) {$z$};
        \node[shape = diamond] (6) at (6,1) {$s$};
        \node[shape = diamond] (7) at (6,-1) {$t$};
        \draw[->,thick] (0) edge (1) 
                                 (0) edge (2) 
                                 (1) edge (5) 
                                 (2) edge (5) 
                                 (5) edge (6) 
                                 (5) edge (7);
      \end{tikzpicture}
\end{center}
There are four strategies in $\Strat_0(u)$ with provenance values $s^2$, $st$, $st$, and $t^2$.
The positional ones are those with values $s^2$ and $t^2$, but all four strategies are
absorption-dominant.
\eex

However, absorption-dominant strategies are \emph{weakly positional} in the 
sense that if a node is reached several times during the same play,
then, without loss of strategic power, the player can  
always make the same choice at that node.
Absorption  among strategies makes sense for both acyclic and
cyclic games. In acyclic games, absorption-dominant strategies are described 
by provenance polynomials in $\S[T]$ (with only finite exponents).
But they are even more interesting for the analysis of
reachability \emph{and safety} games that admit infinite plays. 
The fundamental difference between valuations for 
reachability and safety strategies concerns the valuations of infinite
plays. If, as we assume here, reachability and safety goals are defined for terminal nodes,
then an infinite play is losing for every reachability objective but winning for
every safety objective. As a consequence,
the strategies $\Ss\in\Strat_\sigma(v)$ that enforce all
plays to be non-terminating absorb all other strategies
in $\Strat_\sigma(v)$ that admit at least one infinite play. 

We thus extend the valuations of plays in a game $\Gg$ (with finite game graph that may contain cycles) 
to two different valuation function $f_\sigma^\mu$ and  $f_\sigma^\nu$.
For simplicity, we assume trivial valuations on the edges, so for
a finite play $x$ ending in $t$, we just put
$f_\sigma^\mu(x)=f_\sigma^\nu(x)=f_\sigma(t)$  
but if $x$ is an infinite play, we put $f_\sigma^\mu(x)=0$ and $f_\sigma^\nu(x)=1$.

A strategy $\Ss\in\Strat_\sigma(v)$ may well admit an infinite set of plays.
Taking the semiring  $\Sinf[T]$ with the basic valuation $f_\sigma(t):=t$
for the terminal nodes, strategies are described by monomials  (or 0), and we put 
\[   F^\mu(\Ss):=\prod_{x\in\Plays(\Ss)}  f_\sigma^\mu(x) \qquad\text{ and }\qquad   
 F^\nu(\Ss):=\prod_{x\in\Plays(\Ss)}  f_\sigma^\nu(x) =  \prod_{t\in T} t^{\#_\Ss(t)}.\]
 We extend the absorption order $\succeq$ on monomials by $m\succeq 0$ for all $m$.

 \begin{lemma}  Let $\Gg$ be any finite game graph, with valuations 
 of strategies in $\Sinf[T]$ induced by $f_\sigma(t)=t$ for all $t\in T$. 
 For all strategies $\Ss, \Ss'\in\Strat_\sigma(v)$ we have,
 \begin{itemize}
 \item  $0\neq F^\nu(\Ss)\succeq F^\mu(\Ss)\neq 1$.
 \item  $F^\mu(\Ss)= 0$  if, and only if, $\Ss$ admits an infinite play. Otherwise $F^\mu(\Ss)=F^\nu(\Ss)$. 
 \item  $F^\nu(\Ss)= 1$  if, and only if, $\Ss$ admits only infinite plays.
 \item $\Ss$ absorbs $\Ss'$ if, and only if, both $F^\nu(\Ss)\succeq F^\nu(\Ss')$ and
 $F^\mu(\Ss)\succeq F^\mu(\Ss')$.
 \end{itemize}
 \end{lemma}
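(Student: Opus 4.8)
The plan is to reduce all four items to elementary facts about the single monomial $F^\nu(\Ss)=\prod_{t\in T} t^{\#_{\Ss}(t)}$ together with the extended absorption order, once the effect of infinite plays has been isolated. First I would record the two structural observations on which everything rests. Since $T$ is finite and the exponents $\#_{\Ss}(t)$ range over $\Ninf$, the product $F^\nu(\Ss)=\prod_{t\in T} t^{\#_{\Ss}(t)}$ is always a single monomial of $\Sinf[T]$, hence different from the zero element (the empty antichain); this already gives $F^\nu(\Ss)\neq 0$. Second, exactly one of two cases occurs: if $\Ss$ admits an infinite play $x$ then $f_\sigma^\mu(x)=0$ forces $F^\mu(\Ss)=0$; otherwise every play of $\Ss$ is finite, K\"onig's Lemma makes these plays finite in number, and $f_\sigma^\mu$ and $f_\sigma^\nu$ agree on each of them, so $F^\mu(\Ss)=F^\nu(\Ss)$. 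Because the product of finitely many nonzero monomials of $\Sinf[T]$ is again a nonzero monomial, in this second case $F^\mu(\Ss)=F^\nu(\Ss)\neq 0$.

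These observations settle the first three items directly. For the second item, $F^\mu(\Ss)=0$ is by the dichotomy equivalent to $\Ss$ admitting an infinite play, and the ``otherwise'' clause is exactly the equality $F^\mu(\Ss)=F^\nu(\Ss)$ from the finite case. For the third item, $F^\nu(\Ss)=1$ means all exponents $\#_{\Ss}(t)$ vanish, i.e.\ no finite play reaches a terminal, i.e.\ all plays are infinite. For the first item, $F^\nu(\Ss)\neq 0$ was shown above; $F^\mu(\Ss)\neq 1$ follows since $F^\mu(\Ss)$ is either $0$ or a monomial with a positive exponent (there is always at least one play, so in the finite case at least one terminal outcome occurs), and $F^\nu(\Ss)\succeq F^\mu(\Ss)$ splits along the same dichotomy: either $F^\mu(\Ss)=0$, which is absorbed by any monomial via the extension $m\succeq 0$, or $F^\mu(\Ss)=F^\nu(\Ss)$ and the order is reflexive.

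The substantive point is the fourth item, and here I would first translate the order on genuine monomials into exponents: for nonzero monomials $m\succeq m'$ holds iff $m$ has pointwise-smaller exponents, i.e.\ $m(t)\le m'(t)$ for all $t$. Applying this to $F^\nu$ shows that $F^\nu(\Ss)\succeq F^\nu(\Ss')$ is literally the first defining clause of $\Ss\succeq_a\Ss'$, namely $\#_{\Ss}(t)\le\#_{\Ss'}(t)$ for every $t\in T$. It then remains to prove that, granting this exponent inequality, the second inequality $F^\mu(\Ss)\succeq F^\mu(\Ss')$ is equivalent to the second absorption clause, that if $\Ss$ admits an infinite play then so does $\Ss'$. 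I would verify this by a four-way case analysis on whether each of $\Ss,\Ss'$ admits an infinite play, using the second item to replace each $F^\mu$ by $0$ or by the corresponding $F^\nu$, and using the asymmetry of the extended order, namely that $m\succeq 0$ always but $0\succeq m$ fails for $m\neq 0$.

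The heart of the matter is that the $\mu$-inequality $F^\mu(\Ss)\succeq F^\mu(\Ss')$ fails in exactly one of the four cases, ``$\Ss$ infinite and $\Ss'$ finite'', which is precisely the case in which the implication clause of absorption fails; in the three remaining cases both hold, either because $F^\mu(\Ss)=0\succeq 0=F^\mu(\Ss')$, or because $F^\mu(\Ss)\succeq 0=F^\mu(\Ss')$, or because both reduce to the already-established $F^\nu$-comparison. This matching is the step that most needs care, since it is where the two valuations $f_\sigma^\mu$ and $f_\sigma^\nu$ and the asymmetric treatment of $0$ in the extended order must be reconciled; I expect it to be the main obstacle, with all other verifications being routine once the dichotomy and the exponentwise characterization of $\succeq$ are in place.
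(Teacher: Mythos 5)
Your proof is correct and follows essentially the same route as the paper's: the paper likewise treats the first three items as immediate and proves the fourth by a case analysis on whether $\Ss$ (and $\Ss'$) admit infinite plays, using the same ingredients you identify --- the exponentwise reading of $\succeq$, the dichotomy $F^\mu(\Ss)=0$ versus $F^\mu(\Ss)=F^\nu(\Ss)\neq 0$, and the asymmetry of the extension $m\succeq 0$. The only difference is organizational: the paper argues the two directions of the biconditional directly (forward by cases, backward by contrapositive), whereas you factor it into clause-wise equivalences (the $F^\nu$-comparison matches the play-count clause, and, granting it, the $F^\mu$-comparison matches the infinite-play clause), which is logically equivalent and equally valid.
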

   
 \lueg  Only the last item requires proof. Suppose that $\Ss$ absorbs $\Ss'$. If
 $\Ss$ admits only finite plays, then $F^\mu(\Ss)=F^\nu(\Ss)\succeq F^\nu(\Ss')\succeq F^\mu(\Ss')$.
 If $\Ss$ admits an infinite play, then so does $\Ss'$ and $F^\nu(\Ss)\succeq F^\nu(\Ss')\succeq F^\mu(\Ss')=F^\mu(\Ss)=0$.
 In both cases, $F^\nu(\Ss)\succeq F^\nu(\Ss')$ and
 $F^\mu(\Ss)\succeq F^\mu(\Ss')$.
 Conversely, assume that $\Ss$ does not absorb $\Ss'$. Then either there is a terminal $t$ such that $\Ss$ admits more plays
 with outcome $t$ than $\Ss'$ does, or $\Ss$ admits an infinite play, but $\Ss'$ does not. In the first case, $F^\nu(\Ss)\not\succeq F^\nu(\Ss')$ 
 and in the second case $0=F^\mu(\Ss)\not\succeq F^\mu(\Ss')\neq 0$.
  \hx

\ex
We return to the game described in Example~\ref{ex:reachability-game}
\begin{center}
\begin{tikzpicture}[baseline=-4pt,scale=.8]
        \tikzstyle{every node}=[shape=circle, draw=black]
        \node[shape = diamond] (0) at (0,0) {$s$};
        \node (1) at (2,0) {$v$};
        \node[shape = rectangle] (2) at (4,0) {$w$};
        \node[shape = diamond] (3) at (6,0) {$t$};
        \draw[->,thick,bend left=45] (1) edge (2) (2) edge (1);
        \draw[->,thick] (1) edge (0) (2) edge (3);
      \end{tikzpicture}
\end{center}
with equation system $G_0$ consisting of $X_v=s+X_w$ and $X_w=t\cdot X_v$.
In $\N^\infty[\![s,t]\!]$ the least fixed-point solution is $f(v)=s\cdot(1+t+t^2\dots)$ and
$f(w)=s\cdot(t+t^2+\dots)$. In  $\Sinf[s,t]$ the least fixed-point 
solution $f^\mu=\lfp G_0$ has values  $f^\mu(v)=s$ and $f^\mu(w)=st$, which describes the possible outcomes
of the unique absorption-dominant dominant strategy that enforces finite plays.
The only other absorption-dominant strategy (moving from $v$ to $w$) has value 0
because it admits an infinite play.       

However, the greatest fixed-point solution $f^\nu=\gfp G_0$ of this equation system  in $\Sinf[s,t]$
has values  $f^\nu(v)=s+ t^\infty$ and $f^\nu(w)=st+t^\infty$. Here this second strategy has value
$t^\infty$ since it admits infinitely many plays ending in $t$ (and one infinite play
with value 1).
\eex


\begin{theorem}  Let $\Gg=(V,V_0,V_1,T,E)$  be a game graph and let $G_\sigma$ be the associated 
equation system for Player~$\sigma$. In the semiring $\Sinf[T]$ this system has least and greatest 
fixed point solutions $\lfp G_\sigma$ and $\gfp G_\sigma$ with
\[
(\lfp G_\sigma)(v):=\sum_{\Ss\in\Strat_\sigma(v)}  F^\mu(\Ss) \qquad\text{ and }\qquad 
(\gfp G_\sigma)(v):=\sum_{\Ss\in\Strat_\sigma(v)}  F^\nu(\Ss).
\]
The values of these sums do not change if we restrict them to the absorption-dominant strategies.
\end{theorem}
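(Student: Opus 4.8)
The plan is to prove the two fixed-point identities separately and then read off the absorption-dominant refinement from the fact that addition in $\Sinf[T]$ retains only the $\preceq$-maximal monomials; existence of both fixed points is immediate since, by the Proposition on $\Sinf[X]$, the semiring $\Sinf[T]$ is a complete lattice and fully $\omega$-continuous. The least-fixed-point identity is essentially free: $\Sinf[T]$ is in particular $\omega$-continuous, and we have assumed trivial edge valuations ($h_\sigma\equiv 1$), so Theorem~\ref{thm:reachgames} applies verbatim and gives $(\lfp G_\sigma)(v)=\sum_{\Ss}F(\Ss)$, where $F(\Ss)=\prod_{t\in T}t^{\#_\Ss(t)}$ when $\Ss$ admits only finite plays and $F(\Ss)=0$ otherwise. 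By the preceding Lemma this value is exactly $F^\mu(\Ss)$, so the $\mu$-identity follows.

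The greatest-fixed-point identity is the substantial part. Write $g^\nu(v):=\sum_{\Ss\in\Strat_\sigma(v)}F^\nu(\Ss)$; this is well defined because the $\preceq$-maximal monomials among the $F^\nu(\Ss)$ form a finite antichain. First I would show $g^\nu$ is a fixed point of $G_\sigma$, using the same decomposition of strategies as in the proof of Theorem~\ref{thm:prov-games}: for $v\in V_\sigma$ a strategy is $v\cdot\Ss'$ with $F^\nu(v\cdot\Ss')=F^\nu(\Ss')$, while for $v\in V_{1-\sigma}$ a strategy is $v(\Ss_1\cup\cdots\cup\Ss_n)$ with $\#_{\Ss}(t)=\sum_i\#_{\Ss_i}(t)$, so $F^\nu(\Ss)=\prod_iF^\nu(\Ss_i)$ (using $t^{a+b}=t^a t^b$ even for infinite exponents). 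Complete distributivity of $\Sinf[T]$ then turns these into the equations $X_v=\sum_{w}X_w$ and $X_v=\prod_wX_w$. Being a fixed point, $g^\nu\le\gfp G_\sigma$ holds automatically.

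The reverse inequality $\gfp G_\sigma\le g^\nu$ is where the real work lies. Here I would use $\omega$-co-continuity: in the fully $\omega$-continuous semiring $\Sinf[T]$ the greatest fixed point is the infimum of the co-Kleene approximants $\tilde G^n$, which by a dual of Lemmas~\ref{L1}--\ref{L2} coincide with the valuations of the \emph{safety truncations} $\tilde\Gg^n$ — the depth-$n$ unravelings in which an unfinished play is valued $1$ rather than $0$ — and hence, by Theorem~\ref{thm:prov-games} applied to the acyclic $\tilde\Gg^n$, with $\sum_{\Tt\in\tilde\Strat^{(n)}_\sigma(v)}\tilde F(\Tt)$. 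Now fix a monomial $m=\prod_t t^{e_t}$ with $m\le\gfp G_\sigma(v)$. Since in $\Sinf[T]$ a monomial lies below a sum of monomials iff it lies below one of the summands, for each $n$ there is a truncated strategy $\Tt_n$ with $m\le\tilde F(\Tt_n)$, and truncating a deeper such strategy preserves this property. The set of all truncated strategies $\Tt$ with $m\le\tilde F(\Tt)$, ordered by truncation, is therefore an infinite, finitely branching tree, so König's Lemma yields an infinite branch, i.e.\ a single $\Ss\in\Strat_\sigma(v)$ with $\Ss^{(n)}=\Tt_n$ for all $n$. Every finite play of $\Ss$ reaching $t$ appears in some $\Tt_n$, whence $\#_\Ss(t)\le e_t$ whenever $e_t<\infty$; thus $F^\nu(\Ss)\ge m$ and $m\le g^\nu(v)$. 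As $\gfp G_\sigma(v)$ is the join of such $m$, this gives $\gfp G_\sigma\le g^\nu$ and hence equality. The main obstacle is precisely this extraction of a genuine infinite-game strategy from the bounded truncated strategies: unlike the reachability (LFP) case, truncated strategies with unfinished plays now carry value $1$ rather than $0$ and cannot simply be discarded, so a compactness argument together with the finiteness properties of $\Sinf[T]$ is needed — and this is exactly why $\N^\infty[\![T]\!]$ (Example~\ref{ex:safety-game}) would not suffice.

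Finally, for the absorption-dominant refinement I would observe that, by the preceding Lemma, the strategy order $\succeq_a$ is the pullback of the product order on $\Sinf[T]\times\Sinf[T]$ along $\Ss\mapsto(F^\mu(\Ss),F^\nu(\Ss))$; since ascending chains of monomials in $\Sinf[T]$ are finite, strictly ascending chains in this order are finite, so every strategy is absorbed by some absorption-dominant one. Consequently each $\preceq$-maximal monomial of $\sum_\Ss F^\nu(\Ss)$ (respectively $\sum_\Ss F^\mu(\Ss)$) is already attained by an absorption-dominant strategy, and restricting either sum to absorption-dominant strategies leaves its value unchanged.
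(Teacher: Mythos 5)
Your proposal is correct, and for the least-fixed-point identity it coincides with the paper's proof (both invoke Theorem~\ref{thm:reachgames} and the identification $F(\Ss)=F^\mu(\Ss)$ under trivial edge valuations). For the greatest fixed point, however, you take a genuinely different route. The paper also works with the depth-$n$ truncations in which unfinished plays are valued $1$, identifies the descending Kleene approximants $G^n$ with the valuations $f^n_\sigma$ of these acyclic games, observes $F^\nu(\Ss)=\lim_{n\ra\infty}F(\Ss^{(n)})$, and then finishes in one stroke by asserting that ``these limits commute with summation over strategies'' --- i.e.\ it exchanges the infimum over $n$ with the (possibly infinite) supremum over strategies without further justification. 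You instead prove two inequalities: the sum $g^\nu$ over strategies is itself a fixed point of $G_\sigma$ (via the strategy decomposition of Theorem~\ref{thm:prov-games} together with distributivity of multiplication over arbitrary sums, which in $\Sinf[T]$ reduces to finite distributivity because every set of monomials has finitely many maximal elements), giving $g^\nu\leq\gfp G_\sigma$; and conversely each monomial $m\leq(\gfp G_\sigma)(v)$ is witnessed by a single genuine strategy $\Ss$ obtained by K\"onig's Lemma from the finitely branching tree of truncated strategies $\Tt$ with $m\preceq \tilde F(\Tt)$, giving $\gfp G_\sigma\leq g^\nu$. Your compactness argument is in substance a rigorous proof of exactly the limit/sum exchange that the paper declares ``clear,'' and it makes visible where the special finiteness properties of $\Sinf[T]$ (finite antichains, well-quasi-ordering of monomials) are actually used --- this is the main thing your approach buys, at the cost of a longer argument. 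You also spell out the absorption-dominance claim (every strategy is absorbed by a dominant one because strictly ascending chains in the pulled-back product order are finite), which the paper's proof leaves implicit; this is a welcome addition, though one should note that $\succeq_a$ is only a preorder, so ``maximal'' must be read up to mutual absorption, as in Example~\ref{absdom-vs-pos}.
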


\lueg Since $\Sinf[T]$ is $\omega$-continuous,  the claim for $(\lfp G_\sigma)$ follows from
Theorem~\ref{thm:reachgames}. For the greatest fixed-point solution we use that 
$\Sinf[T]$ is also $\omega$-co-continuous and has the structure of a complete lattice.
Thus, $(\gfp G_\sigma)$ is the limit of the descending chain $(G^n)_{n<\omega}$ of approximants
starting with $G^0=1$, and $G^{n+1}$ is defined by applying the equation system 
to $G^n:V\ra\Sinf[T]$.
   
As in the proof of Theorem~\ref{thm:reachgames} we argue with the unfoldings $\Gg^n$ of $\Gg$
up to $n-1$ moves, but we now put $f^n_\sigma(\pi v)=1$ for the final node of an `unfinished'
play, i.e. with $|\pi|=n-1$ and $v\in V\setminus T$. The valuations $f^n_\sigma$ extend to
all nodes of the (acyclic) game $\Gg^n$, and again, coincide with the Kleene approximants $G^n$:
for every $n$ and every $v$ we have that $G^n(v)=f_\sigma^n(v)$.
The different valuation of the terminal nodes in $\Gg^n$ also has the effect 
that any $\Tt\in\Strat^{(n)}_\sigma(v)$ we have that $F(\Tt)=\prod_{t\in T} t^{\#_\Tt(t)}$,
which is a monomial with only finite exponents. Since $\Gg^n$ is acyclic
\[    f^n_\sigma(v)=\sum_ {\Tt\in\Strat^{(n)}_\sigma(v)} F(\Tt).\]

Every strategy $\Ss\in\Strat_\sigma(v)$ for the original game $\Gg$ induces for each game $\Gg^n$
a strategy $\Ss^{(n)})$. Conversely,  every strategy $\Tt\in\Strat^{(n)}_\sigma(v)$ is induced by at least one 
strategy $\Ss\in\Strat_\sigma(v)$. Since the semiring $\Sinf[T]$ is idempotent, we have that, for every $n<\omega$,
\[  \sum_{\Ss\in\Strat_\sigma(v)}   F(\Ss^{(n)})  = \sum_{\Tt\in\Strat^{(n)}_\sigma(v)}   F(\Tt).\] 

As graphs, the sequence $(\Ss^{(n)})_{n\in\omega}$ of induced strategies is increasing, i.e. $\Ss^{(1)}\subseteq S^{(2)} \subseteq \cdots$, 
but their values in $\Sinf[T]$ are decreasing, i.e.,   $F(\Ss^{(1)})\succeq F(S^{(2)}) \subseteq \cdots$.
Further, $F^\nu(\Ss) =  \prod_{t\in T} t^{\#_\Ss(t)}$  with exponents $\#_\Ss(t) \in\Ninf$ and
the corresponding exponents in the monomial $F(\Ss^{(n)}$ tell us how often
a terminal position $t\in T$ has been reached by $\Ss$ after $n-1$ moves. In particular,
\[  F^\nu(\Ss) = \lim_{n\ra\infty}  F(\Ss^{(n)}).\]
It is clear that these limits commute with summation over stratgies, so we have that
\[  \sum_{\Ss\in\Strat_\sigma(v)}  F^\nu(\Ss) = \lim_{n\ra\infty} \sum_{\Ss\in\Strat_\sigma(v)}  F(\Ss^{(n)}) =   
 \lim_{n\ra\infty}  \sum_{\Tt\in\Strat^{(n)}_\sigma(v)}   F(\Tt).\]
Putting everything together we get that
\[  (\gfp G_\sigma)(v) = \lim_{n\ra\infty} G^n((v) = \lim_{n\ra\infty} f^n_\sigma(v) = 
\lim_{n\ra\infty} \sum_ {\Tt\in\Strat^{(n)}} F(\Tt)  = \sum_{\Ss\in\Strat_\sigma(v)}  F^\nu(\Ss).\]
\hx

These least and greatest fixed points give precise descriptions of the absorption-dominant reachability and 
safety strategies of the players for each position of the game.

\ex
We return to the Example~\ref{ex:safety-game}: 
\begin{center}
\begin{tikzpicture}[baseline=-4pt,scale=.8]
        \tikzstyle{every node}=[shape=circle, draw=black]
        \node[shape = diamond] (0) at (0,0) {$s$};
        \node[shape = rectangle] (1) at (2,0) {$w$};
        \node (2) at (4,0) {$v$};
        \node[shape = rectangle] (3) at (6,0) {$z$};
        \node[shape = diamond] (4) at (8,0) {$t$};
        \draw[->,thick,bend left=45] (1) edge (2) (2) edge (1) (2) edge (3) (3) edge (2);
        \draw[->,thick] (1) edge (0) (3) edge (4);
      \end{tikzpicture}
\end{center}
Recall that the associated equation system for Player~0 has the equations
$X_v=X_w+X_z$, $X_w=f(s)\cdot X_v$, and $X_z=f(t)\cdot X_v$. 

The greatest fixed-point solution in $\S^\infty[s,t]$, computed by iterating
from the top element $f=1$ results in $f^\nu(v)=s^\infty  + t^\infty$, $f^\nu(w)=s^\infty+st^\infty$, and
$f^\nu(z)=s^\infty t + t^\infty$. Notice that indeed, $f^\nu(v)=f^\nu(w)+f^\nu(z)$ because $st^\infty$ is absorbed by
$t^\infty$, and $s^\infty t$ by $s^\infty$.  The greatest fixed point solution indicates that
Player~0 has two absorptive strategies (move always to $w$ or move always to $z$),
and gives, for each of the terminal nodes $s$ and $t$ the number of plays ending
in that node that the strategy admits. For instance, if the safety objective requires to
avoid $t$, then $v$ and $w$ the strategy moving to $w$ has infinitely many 
winning plays ending in $s$ (and one nonterminating play with value 1),
but since $f(z)[s,0]=0$, Player 0 has no safety strategy from $z$ that avoids $t$.
\eex

\section{Outlook}
In this paper we have extended the semiring framework for provenance analysis by new elements,
so that it can be applied to logics with negation, in particular first-order logic and
fixed-point logics, and to an analysis of games that provides detailed information about
the number and properties of the strategies of the players. 

Our treatment of negation is based on transformations to negation normal form and the use
of newly introduced semirings of dual-indeterminate polynomials and dual-indeterminate power series.
In particular, $\omega$-continuous semirings $\N^\infty[\![X,\nnX]$ of dual-indeterminate power series
povide an adequate general framework for logics with least fixed points, such as $\posLFP$ (and Datalog)
and the semiring of absorptive generalized dual-indeterminate polynomials $\Sinf[\![X,\nnX]$ 
permits an adequate treatment of greatest fixed points.  We have thus laid foundations for 
a provenance analysis of general fixed-point logics, and we are currently applying this also to
modal, temporal, and dynamic logics.

On the level of games, we have seen that provenance valuations in $\omega$-continuous and absorptive
semirings give us very detailed information about strategies for possibly infinite games 
with reachability and safety objectives. We are currently expanding this to games with more complicated
objectives, such as B\"{u}chi, Co-B\"{u}chi or parity games. Since these objectives do no longer depend on
terminal nodes but on the data occurring in infinite plays, a somewhat different framework has to be used,
depending for instance on basic valuations of the edges of the game graph.


\end{document}